\newcolumntype{d}[1]{D{.}{.}{#1}}
\newtheorem{theorem}{Theorem}[section]
\newtheorem{definition}{Definition}[section]
\newtheorem*{remark}{Remark}
\newcommand{\raisedchi}{\raisebox{\depth}{\(\chi\)}}
\newcommand{\ie}{{\em i.e.,} }
\newcommand{\eg}{{\em e.g.,} }
\begin{document}
%
% paper title
% Titles are generally capitalized except for words such as a, an, and, as,
% at, but, by, for, in, nor, of, on, or, the, to and up, which are usually
% not capitalized unless they are the first or last word of the title.
% Linebreaks \\ can be used within to get better formatting as desired.
% Do not put math or special symbols in the title.
\title{NPD Entropy: A Non-Parametric Differential Entropy Rate Estimator}
%
%
% author names and IEEE memberships
% note positions of commas and nonbreaking spaces ( ~ ) LaTeX will not break
% a structure at a ~ so this keeps an author's name from being broken across
% two lines.
% use \thanks{} to gain access to the first footnote area
% a separate \thanks must be used for each paragraph as LaTeX2e's \thanks
% was not built to handle multiple paragraphs
%

\author{Andrew~Feutrill %~\IEEEmembership{Member,~IEEE,}
        and Matthew~Roughan,~\IEEEmembership{Fellow,~IEEE,}% <-this % stops a space
\thanks{A. Feutrill is with CSIRO's Data 61, the School of Mathematical Sciences, University of Adelaide, Adelaide, Australia and the Australian Research Council's Centre of Excellence for Mathematical and Statistical Frontiers (ACEMS), e-mail: andrew.feutrill@data61.csiro.au.}% <-this % stops a space
\thanks{M. Roughan is with the School of Mathematical Sciences, University of Adelaide, Adelaide, Australia and the Australian Research Council's Centre of Excellence for Mathematical and Statistical Frontiers (ACEMS).}}% <-this % stops a space
%\thanks{Manuscript received April 19, 2005; revised August 26, 2015.}}

% note the % following the last \IEEEmembership and also \thanks - 
% these prevent an unwanted space from occurring between the last author name
% and the end of the author line. i.e., if you had this:
% 
% \author{....lastname \thanks{...} \thanks{...} }
%                     ^------------^------------^----Do not want these spaces!
%
% a space would be appended to the last name and could cause every name on that
% line to be shifted left slightly. This is one of those "LaTeX things". For
% instance, "\textbf{A} \textbf{B}" will typeset as "A B" not "AB". To get
% "AB" then you have to do: "\textbf{A}\textbf{B}"
% \thanks is no different in this regard, so shield the last } of each \thanks
% that ends a line with a % and do not let a space in before the next \thanks.
% Spaces after \IEEEmembership other than the last one are OK (and needed) as
% you are supposed to have spaces between the names. For what it is worth,
% this is a minor point as most people would not even notice if the said evil
% space somehow managed to creep in.

% The paper headers
\markboth{}%IEEE Transactions on Signal Processing}%
{Feutrill \MakeLowercase{\textit{et al.}}: }
%\markboth{IEEE Transactions on Information Theory,}%
%{Feutrill \MakeLowercase{\textit{et al.}}: On Non-Parametric Differential Entropy Rate Estimation and Measurement}
% The only time the second header will appear is for the odd numbered pages
% after the title page when using the twoside option.
% 
% *** Note that you probably will NOT want to include the author's ***
% *** name in the headers of peer review papers.                   ***
% You can use \ifCLASSOPTIONpeerreview for conditional compilation here if
% you desire.

% If you want to put a publisher's ID mark on the page you can do it like
% this:
%\IEEEpubid{0000--0000/00\$00.00~\copyright~2015 IEEE}
% Remember, if you use this you must call \IEEEpubidadjcol in the second
% column for its text to clear the IEEEpubid mark.

% use for special paper notices
%\IEEEspecialpapernotice{(Invited Paper)}

% make the title area
\maketitle

% As a general rule, do not put math, special symbols or citations
% in the abstract or keywords.
\begin{abstract}
The estimation of entropy rates for stationary
discrete-valued stochastic processes is a well studied problem in
information theory. However, estimating the entropy rate for
stationary continuous-valued stochastic processes has not received as
much attention. In fact, many current techniques are not able to
accurately estimate or characterise the complexity of the differential
entropy rate for strongly correlated
processes, such as Fractional Gaussian Noise and ARFIMA(0,d,0). To the
point that some cannot even detect the trend of the entropy rate, \eg when it increases/decreases, maximum, or asymptotic trends, as a function of their Hurst parameter. However, a recently developed technique provides accurate estimates at a high computational cost. In this paper, we define a robust technique for non-parametrically estimating the differential entropy rate of a continuous valued stochastic process from observed data, by making an explicit link between the differential entropy rate and the Shannon entropy rate of a quantised version of the original data. Estimation is performed by a Shannon entropy rate estimator, and then converted to a differential entropy rate estimate. We show that this technique inherits many important statistical properties from the Shannon entropy rate estimator. The estimator is able to provide better estimates than the defined relative measures and much quicker estimates than known absolute measures, for strongly correlated processes. Finally, we analyse the complexity of the estimation technique and test the robustness to non-stationarity, and show that none of the current techniques are robust to non-stationarity, even if they are robust to strong correlations.
\end{abstract}

% Note that keywords are not normally used for peerreview papers.
\begin{IEEEkeywords}
Differential entropy rate estimation, stationary process, quantisation
\end{IEEEkeywords}

% For peer review papers, you can put extra information on the cover
% page as needed:
% \ifCLASSOPTIONpeerreview
% \begin{center} \bfseries EDICS Category: 3-BBND \end{center}
% \fi
%
% For peerreview papers, this IEEEtran command inserts a page break and
% creates the second title. It will be ignored for other modes.
\IEEEpeerreviewmaketitle

\section{Introduction}
% The very first letter is a 2 line initial drop letter followed
% by the rest of the first word in caps.
% 
% form to use if the first word consists of a single letter:
% \IEEEPARstart{A}{demo} file is ....
% 
% form to use if you need the single drop letter followed by
% normal text (unknown if ever used by the IEEE):
% \IEEEPARstart{A}{}demo file is ....
% 
% Some journals put the first two words in caps:
% \IEEEPARstart{T}{his demo} file is ....
% 
% Here we have the typical use of a "T" for an initial drop letter
% and "HIS" in caps to complete the first word.
\IEEEPARstart{E}{stimation} of entropy rate is a classical problem in
information theory.  The entropy rate is the asymptotic limit of the
average per sample entropy of a discrete-time stationary stochastic
process. It is used as a measure of the complexity of the process and
thus to perform comparisons.

Estimation of entropy rate is easy when the underlying stochastic
process is known to follow a simple model such as a Markov chain. The
reality is that most real data sequences are not trivial to model.  Non-parametric approaches to estimation have the very
significant advantage that they do not depend on a fitting a model of
the data and hence have a degree of robustness missing from parametric
estimators. 

A great deal of theory has been developed on non-parametric entropy
rate estimation from data from finite
alphabets~\cite{kontoyiannis_soukhov1994, shields1992}, since these
models form the basis of discrete codes for communications.  These
estimation techniques have been extended to countably infinite
alphabets~\cite{quas1999entropy, kontoyiannis1998nonparametric}.
However, these approaches cannot be directly applied to
continuous-valued stochastic processes, which are the main topic of
interest here. 

There are approaches that have been designed for estimating the
complexity of continuous-valued stochastic processes:
approximate~\cite{pincus1991approximate},
sample~\cite{richman2000physiological} and permutation
entropy~\cite{bandt2002permutation}. These estimators have been used
in estimating the entropy rate for processes, particularly for those
that are memoryless or have short memory. For example, approximate
entropy has been shown to converge to the entropy rate for independent
and identically distributed processes and first order Markov chains in
the discrete-valued case~\cite{pincus1991approximate}. However, at
best, these estimators are sensitive to their parameter choices; at
worst we shall see that they have severe defects.

In this paper, we compare the existing approaches and develop a new
non-parametric differential entropy rate estimator -- NPD Entropy --
for continuous-valued stochastic processes. It combines the best of
the existing discrete alphabet non-parametric estimators with the
standard signal processing techniques to obtain a estimate that is
more reliable than the alternatives. We have implemented this estimator in Python as a package, npd\_entropy, that is available on GitHub\footnote{https://github.com/afeutrill/npd\_entropy}.

%Techniques have been developed to estimate the differential entropy rate or can be applied to processes of continuous random variables. 

\begin{table*}
	\begin{center}
		\begin{tabular}{l|cc|ccc|ld{3.2}}
                  & \multicolumn{2}{c|}{\textbf{Values}} &
                  \multicolumn{3}{c|}{\textbf{Estimation Quality}} & 
                  &  \multicolumn{1}{r}{\textbf{Computation}} \\
		  \textbf{Estimation Technique} &
                  \textbf{Discrete} & \textbf{Continuous} &
                  \textbf{Consistent} & \textbf{Asymp. unbiased} &
                  \textbf{Correlation Length} & \textbf{Complexity} &
                  \multicolumn{1}{r}{\textbf{Time (s)}} \\ 
			\hline
			Grassberger~\cite{grassberger1989estimating} & \ding{52} & \ding{54} & \ding{54} & \ding{54} & $\approx \log(N)$ &  \\
			Kontoyiannis and Suhov~\cite{kontoyiannis_soukhov1994} & \ding{52} & \ding{54} & \ding{52} &  \ding{52} & $\approx \log(N)$ &  \\
			Approximate Entropy~\cite{pincus1991approximate} & \ding{52} & \ding{52} & \ding{54} & \ding{54} & $m$ & $O(N^2)$  & 10.01\\
			Sample Entropy~\cite{richman2000physiological} & \ding{52} & \ding{52} & \ding{54} & \ding{54} & $m$ & $O(N^2)$  & 263.0\\
			Permutation Entropy~\cite{bandt2002permutation} & \ding{52} & \ding{52} & \ding{54} & \ding{54} & $n$ & $O(n!N)$ & 0.82 \\
			Specific Entropy~\cite{darmon2016specific} & \ding{54} & \ding{52} & \ding{52} & \ding{52} & $p$ & $O(N^2 p)$ & 504,219.9\\
                        NPD Entropy &\ding{54} & \ding{52} & \ding{52} & \ding{52} & $\approx \log(N)$ & $O(N \log N)$ & 39.96 \\ 
		\end{tabular}
		\caption{Comparison of entropy rate estimators. The
                  estimator from Kontoyiannis and Suhov has desirable
                  properties: consistent and asymptotically
                  unbiased.  Approximate, sample and permutation
                  entropy can be applied to either discrete or
                  continuous valued sequences but these are biased and
                  inconsistent. Correlation length refers to the
                  longest lag at which correlations are included into
                  the entropy rate estimate where the length of
                  data is $N$, the length of substrings matched in
                  approximate and sample entropy is $m$, the order of
                  permutations used in permutation entropy is $n$, and
                  $p$ is the number of previous observations used in the conditional entropy calculation. Note
                  also that although specific entropy behaves
                  relatively well, its computation times are
                  prohibitive. We have only included the complexity and computation time for the continuous-valued estimators that we test in this paper. \label{tab:estimator_comparison} \label{tab:run_time_comparison} \label{tab:complexity}} 
	\end{center}
        
\end{table*}

Table~\ref{tab:estimator_comparison} outlines our results. Notably,
some discrete estimators are consistent and asymptotic unbiased, but cannot be applied directly to continuous-valued
processes as they are based on string matching. On the other hand,
the main approaches that have been applied to continuous values --
approximate, sample and permutation entropy -- are not consistent, and use (short) finite windows, limiting their ability to
cope with processes with extended correlations. We show,
for instance, that these estimation techniques do not make accurate
estimates for the entropy rate for processes whose dependency
structure has slowly decaying correlations.

We examine these entropy rate estimates performance on
data generated by long range dependent (LRD) processes. LRD processes
have been shown to be effective models for phenomena such as network
traffic~\cite{Leland:1993:SNE:166237.166255,
  willinger_self_similar_high_variability_1997},
finance~\cite{willinger_stock_1999}, climate
science~\cite{varotsos2006} and hydrology~\cite{hurst1951}.  We apply
the estimation approaches to two common LRD processes with known entropy rate properites. Thus we
can show exactly how bad some estimators are when applied to an even
slightly challenging data set.

%Specific entropy
Another alternative -- specific entropy -- was developed as a
technique to calculate the predictive uncertainty for a specific state
of a continuous-valued process~\cite{darmon2016specific}. This
approach utilises more rigorous statistical foundations to estimate
the entropy rate of a state given the observation of a finite
past. The technique is able to make accurate entropy rate estimates by
calculating the average over the states and is able to capture the
complex dependency structure with past observations. However this
comes at a large computational cost, and hence cannot be used for
large sequences, or for streaming data to make online estimates (see
Table~\ref{tab:estimator_comparison} for computation time
comparisons).

%% The estimates for approximate and sample entropy are overestimates for
%% the true differential entropy rate, where the entropy rate decreases
%% due to the higher information content from the memory of
%% process. Permutation entropy returned results which are very close to
%% the maximum value of the measure, even for the strongly positively
%% correlated case. Hence, these measures can perform well as a indicator
%% for the complexity within a time series, however are not sufficient
%% for exact estimation.

%when considered as a function of the Hurst parameter, $H$, which describes the strength of the correlations with the past.

%make comment about Specific entropy and where it has parts that fall down

We develop an estimation technique -- NPD Entropy -- that utilises discrete state non-parametric estimators. We utilise a connection
between the Shannon entropy and differential
entropy~\cite[pg. 248]{cover_thomas_2006}, and then extend it to the
case of entropy rates. The technique
quantises the continuous-valued data into discrete bins, then makes
an estimation of the quantised process using discrete alphabet
techniques. Then the differential entropy rate estimate is calculated
by adjusting by the quantitative difference between the differential
entropy rate and the Shannon entropy rate of the quantised process.

We show that NPD Entropy inherits useful estimation properties
from discrete alphabet estimators. Hence, by
choosing a discrete estimator with the desired
properties we can ensure that NPD Entropy has the same properties for estimation on continuous-valued data. We show that
this performs well in the estimation of differential entropy rate of
stochastic processes which have more complex dependency structure. 

We also compare the runtime performance of techniques and find that
NPD Entropy can make much faster estimates than any approach
of comparable accuracy.

%% We also analyse
%% the robustness of the technique to non-stationarity, and show that the
%% technique overestimates the true differential entropy.

\section{Background}

\subsection{Entropy Rate}

In this section we'll define the concepts required. Many of these
are standard, but we define them carefully here because of the
proliferation of terms involving ``entropy'' in some manner. First
we'll define the Shannon entropy of a discrete random variable. This
is the classic definition of information entropy from Claude Shannon's
foundational paper~\cite{Shannon48}. 

\begin{definition}\label{shannon_entropy}
	For a discrete random variable, $X$, with support on $\Omega$, and a probability mass function $p(x)$, the Shannon entropy, $H(X)$ is defined as,
	$H(X) = -\sum_{x \in \Omega} p(x)\log p(x).
	$
\end{definition} 

We extend the definition of Shannon entropy for a collection of random variables, called the joint entropy. 

\begin{definition}
	For a collection of discrete random variables, $X_1, ... , X_n$, with support on, $\Omega_1, ... , \Omega_n$ and joint probability mass function $p(x_1, ... , x_n) = p(\mathbf{x})$, we define the joint entropy of the collection of random variables as,
	\begin{align*}
	H(X_1, ... , X_n) &= -\sum_{x_1 \in \Omega_1} ... \sum_{x_n \in \Omega_n} p(\mathbf{x})\log p(\mathbf{x}).
	\end{align*}
\end{definition}

Entropy can be thought of as the average uncertainty or randomness contained in a random variable. We want an entropic notion that can be applied to stochastic processes, and hence we will define the entropy rate, the average uncertainty per random variable.

\begin{definition}
	For a discrete-valued, discrete-time stochastic process, $\raisedchi = \{X_i\}_{i \in \mathbb{N}}$, the entropy rate, is defined where the limit exists as,
	\begin{align*}
	H(\raisedchi) &= \lim\limits_{n \rightarrow \infty} \frac{1}{n} H(X_1, ... , X_n).
	\end{align*}
\end{definition}

In this paper, we are considering the entropy rate of stochastic processes of continuous random variables, so we'll need to extend the notion of entropy to continuous random variables. Claude Shannon in his original treatment of entropy~\cite{Shannon48}, extended the definition by considering the definition of Shannon entropy as the expected value of the information content, \emph{i.e.}, $H(X) = -E[\log(p(X))]$. 

\begin{definition}\label{differential_entropy}
	The differential entropy, $h(X)$ of a random variable, $X$, with support, $\Omega$, and probability density function, $f(x)$, is,
	\begin{align*}
	h(X) &= -\int_{\Omega} f(x)\log f(x) dx.
	\end{align*}
\end{definition}

Differential entropy has some important properties which are different
from Shannon entropy. For example, differential entropy can be
negative, or even diverge to $-\infty$, which we can see by
considering the Dirac delta function, $\delta(x)$, \emph{i.e.}, the
unit impulse, defined by the properties $\delta(x) =0$ for $x \neq
0$ and $\int_{-\infty}^{\infty} \delta(x) \, dx = 1$. The Dirac delta
can be thought of in terms of probability as a completely determined
point in time, that is, a function possessing no uncertainty. It can
be constructed as the limit of rectangular pulses of constant area 1
as their width decreases, equivalent to the density of a uniform random variable, and hence we can calculate the entropy of
the Dirac delta as
\begin{align*}
h(X) = -\int_{-a}^{a} \frac{1}{2a} \log\left(\frac{1}{2a}\right) dx
% &= \frac{1}{2a}\log(2a)\left[x\right]_{-a}^{a},\\
= \log(2a),
\end{align*}
which tends to $-\infty$ as  $a \rightarrow 0$.

%The intuition for $h(X) = -\infty$ from Cover and
%Thomas~\cite[pg. 248]{cover_thomas_2006} is that the number of bits on
%average required to describe a random variable, $X$ to n-bit accuracy
%is $h(X) + n$, when using $\log_2$ for calculation of the differential entropy. Meaning $h(X) = -\infty$, can be read as requiring
%$n-\infty$ bits, and therefore we can describe the random variable
%arbitrarily accurately without using any bits.

Similar to Shannon entropy we define the joint differential entropy for a collection of continuous random variables. 

\begin{definition}
	The joint differential entropy of a collection of continuous random variables, $X_1, ... , X_n$, with support on, $\Omega_1 \times ... \times \Omega_n$, with a joint density function, $f(x_1, ... , x_n) = f(\mathbf{x})$, is
	\begin{align*}
	h(X_1, ... , X_n) &= -\int_{\Omega_1} ...\int_{\Omega_n} f(\mathbf{x}) \log f(\mathbf{x}) d\mathbf{x}.
	\end{align*}
\end{definition}

Now we can define the object that we want to estimate, the differential entropy rate. This is a direct analogue of the discrete-valued random variable case.

\begin{definition}
	The differential entropy rate for a continuous-valued, discrete-time stochastic process, $\raisedchi = \{X_i\}_{i \in \mathbb{N}}$, is defined where the limit exists as,
	\begin{align*}
	h(\raisedchi) &= \lim\limits_{n \rightarrow \infty} \frac{1}{n} h(X_1, ... , X_n).
	\end{align*}
\end{definition}

We state an equivalent characterisation of the differential entropy rate for stationary stochastic processes, using conditional entropy. This was developed for the Shannon entropy of discrete processes, however can be extended to differential entropy utilising an identical argument~\cite[pg. 416]{cover_thomas_2006}.

\begin{theorem}\label{conditional_entropy_rate}
	For a stationary stochastic process, $\{X_i\}_{i \in \mathbb{N}}$, the differential entropy rate where the limit exists is equal to,
	\begin{align*}
	h(\raisedchi) &= \lim\limits_{n \rightarrow \infty} h(X_n | X_{n-1}, ... , X_1).
	\end{align*}
\end{theorem}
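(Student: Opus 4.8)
The plan is to reproduce the classical Ces\`aro-averaging argument that gives the analogous identity for the Shannon entropy rate~\cite[pg.~416]{cover_thomas_2006}, checking that every step survives the passage from probability mass functions to densities. First I would record the chain rule for joint differential entropy, $h(X_1, \ldots, X_n) = \sum_{i=1}^{n} h(X_i \mid X_{i-1}, \ldots, X_1)$, which follows from the factorisation $f(\mathbf{x}) = \prod_{i=1}^{n} f(x_i \mid x_{i-1}, \ldots, x_1)$ of the joint density and linearity of the integral, exactly as in the discrete case. Second, I would show that the sequence $a_n := h(X_n \mid X_{n-1}, \ldots, X_1)$ is non-increasing. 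Removing a conditioning variable cannot decrease differential entropy, since $h(X_n \mid X_{n-1}, \ldots, X_2) - h(X_n \mid X_{n-1}, \ldots, X_1) = I(X_n; X_1 \mid X_{n-1}, \ldots, X_2) \ge 0$ --- conditional mutual information being non-negative in the continuous setting as well, via the non-negativity of relative entropy --- and stationarity gives $h(X_{n+1} \mid X_n, \ldots, X_2) = h(X_n \mid X_{n-1}, \ldots, X_1)$, so that $a_{n+1} \le a_n$.

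Being monotone, the sequence $\{a_n\}$ has a limit $a$ in the extended reals, so $\lim_{n \to \infty} h(X_n \mid X_{n-1}, \ldots, X_1)$ always exists. The remaining ingredient is the Ces\`aro lemma: if $a_n \to a$ then $\frac{1}{n}\sum_{i=1}^{n} a_i \to a$, a statement that remains valid when $a = -\infty$. Combining the chain rule with this lemma yields $\frac{1}{n} h(X_1, \ldots, X_n) = \frac{1}{n}\sum_{i=1}^{n} a_i \to a = \lim_{n \to \infty} a_n$; since the left-hand side is by definition the differential entropy rate $h(\raisedchi)$ whenever that limit exists, this is precisely the claimed identity.

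I expect the main obstacle to be measure-theoretic bookkeeping rather than anything conceptual: one needs enough regularity --- existence of all joint and conditional densities, and the joint differential entropies being finite (or at least the intermediate quantities never presenting the indeterminate form $\infty - \infty$) --- for the chain rule and the ``conditioning reduces entropy'' inequality to hold literally. This is the same standing caveat under which the discrete-valued version is usually stated, and once it is granted the proof is a line-for-line transcription of the Shannon-entropy argument, which is why the excerpt can assert that an ``identical argument'' applies.
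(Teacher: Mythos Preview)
Your proposal is correct and is exactly the argument the paper has in mind: the paper does not write out a proof at all but simply asserts that the discrete-case proof in Cover and Thomas~\cite[pg.~416]{cover_thomas_2006} extends to differential entropy by an ``identical argument,'' and your chain-rule plus monotonicity plus Ces\`aro outline is precisely that argument.
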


\noindent This states that the differential entropy rate is the limit
of the new information that we get from each new random variable,
after observing the infinite past.

\subsection{Strongly Correlated Processes}
Now we will define the notion of a stochastic process with strong
correlations, which we will use to test the effectiveness of the
entropy rate estimation techniques.  We will be using processes which
are called long range dependent (LRD). In many contexts, {\em e.g.,}
\cite{Leland:1993:SNE:166237.166255,willinger_self_similar_high_variability_1997,willinger_stock_1999,varotsos2006,hurst1951},
these processes are the norm, not pathologies. LRD processes have an
autocorrelation function, $\rho(k)$, which decays as a power law in
contrast to short-range correlated processes whose autocorrelations
decay exponentially, and which have typically been used to test
entropy estimators. However, LRD processes are still stationary, and
can be modelled relatively simply, and thus form a practically useful
test case, which should be seen as only mildly challenging. 

\begin{definition}
	Let $\{X_n\}_{n \in \mathbb{N}}$ be a stationary process. If there exists an $\alpha \in (0,1)$ and a constant $c_\rho > 0$, such that,
	\begin{align*}
	\lim\limits_{k \rightarrow \infty} \frac{\rho(k)}{c_\rho k ^{-\alpha}} = 1.
	\end{align*}
	Then we say that, $X_n$, is long range dependent.
\end{definition}

An implication of this is that the sum of the autocorrelations
diverges, {\em i.e.,} $\sum_{k=1}^{\infty} c_\rho k ^{-\alpha} =
\infty$, which is often given as an equivalent definition of
LRD~\cite[pg. 43]{beran1994statistics}. 

A parameter of interest when analysing the strength of correlations is
the Hurst parameter, $H$. It it linked to the exponent of the power-law decay by the relationship~\cite[pg. 42]{beran1994statistics}, $H = 1 - \alpha/2$.
$H$ takes values
between 0 and 1, with $H > \frac{1}{2}$ representing the region where
the process is positively correlated, with the process becoming
completely correlated at $H = 1$. We call processes in this case LRD, otherwise we call them short range dependent (SRD). The region $H < \frac{1}{2}$
represents the region where a process is negatively correlated, with a
completely negatively correlated process is at $H=0$. The autocorrelation functions sum to one in this case,
$\sum_{k=1}^{\infty} \gamma(k) = 0$, and is called constrained short
range dependent (CSRD)~\cite{gefferth2003nature}. This case much less studied,
and the differential entropy rate
differs between FGN and ARFIMA(0,d,0)~\cite{feutrill2021differential}. Finally the mid
point, $H=\frac{1}{2}$ are processes with short range or no correlations,
\emph{i.e.}, white Gaussian noise. 

\subsubsection{Fractional Gaussian Noise}

Fractional Brownian Motion (FBM) is a continuous time Gaussian
process, $B_H(t)$, on $[0, T]$, $B_H(0) = 0$, $E[B_H(t)] = 0,
\forall t \in [0,T]$, and with covariance function, 
\begin{align*}
E[B_H(t)B_H(s)] &= \frac{1}{2} \left( t^{2H} + s^{2H} - |t-s|^{2H} \right),
\end{align*}

\noindent where $H$ is the Hurst
parameter~\cite{MandelbrotNess1968}. Fractional Gaussian (FGN) Noise
is the stationary increment process of FBM, {\em i.e.,} $X(t) = B_H(t+1) - B_H(t)$.

The entropy rate of FGN is~\cite{feutrill2021differential}
\begin{align*}
	h(\raisedchi) = \frac{1}{2} \log (2 \pi e) + \frac{1}{4 \pi} \int_{-\pi}^{\pi} \log f(\lambda) d\lambda,
\end{align*}
 with the spectral density 
 \begin{align*}
 	f(\lambda) = 2 c_f (1 - \cos\lambda) \sum_{j=-\infty}^{\infty} |2 \pi j + \lambda |^{-2H -1},
 \end{align*}
 where $c_f = \frac{\sigma^2}{2\pi}\sin\left(\pi H\right) \Gamma\left(2H+1\right)$. Here we
 calculate the integral numerically.

\begin{figure}
	\centering
	\includegraphics[width=0.8\linewidth]{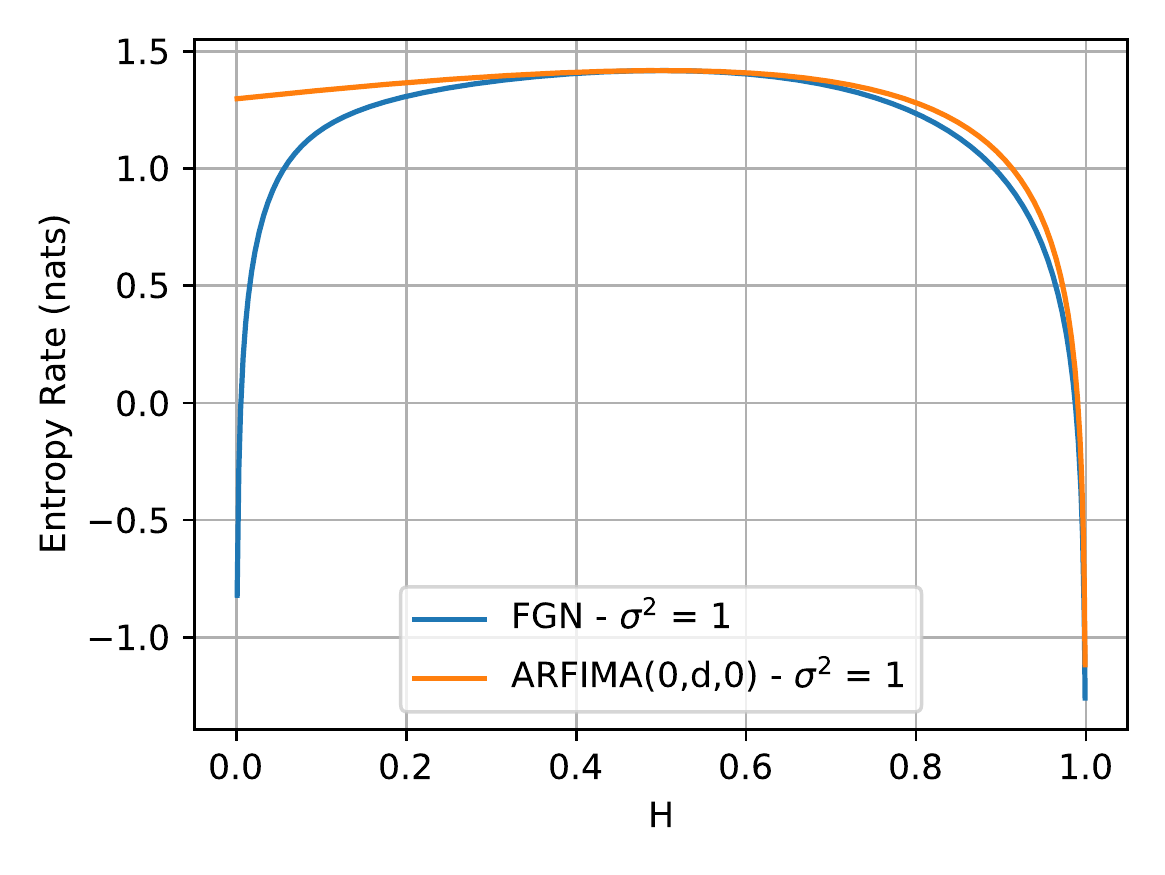}
	\caption{The entropy rate of FGN and ARFIMA(0,d,0) processes with variance, $\sigma^2 = 1$, as function of the Hurst parameter. The maximum for both processes  is at $H=0.5$, and the function tends to $-\infty$ asymptotically as $H \rightarrow$ 0 and 1 for FGN and $H \rightarrow$ 1 only for ARFIMA(0,d,0)\cite{feutrill2021differential}.}
	\label{fig: fgn_entropy_rate}
\end{figure}

We plot the entropy rate of FGN as a function of the Hurst parameter
in Figure~\ref{fig: fgn_entropy_rate}. The entropy rate function
attains a maximum at $H=0.5$, which is where the process is completely
uncorrelated. As the process tends
to 0 or 1, increasing the strength of the positive or negative
correlations, the entropy rate function decreases. The stronger the
correlations, more information is known and hence
there is lower uncertainty about future values.

\subsubsection{ARFIMA(0,d,0)}

An ARFIMA(0,d,0) process, $\{X_n\}_{n \in \mathbb{N}}$, is the
stationary solution of the following equation for $-\sfrac{1}{2} < d <
\sfrac{1}{2}$ (the parameters $d$ and $H$ are related as $d = H -
\sfrac{1}{2}$) 
\begin{align*}
(1-L)^{d} X_n &= \epsilon_n,
\end{align*}
where $L$ is the lag operator, which is defined as $L X_n = X_{n-1}$~\cite{hosking1981}.
 The autocorrelation function is equal to,
 \begin{align*}
 	 \rho(k) = \frac{\Gamma(1-d)\Gamma(k+d)}{\Gamma(d)\Gamma(k+1-d)}. 
 \end{align*}
 As $\Gamma(k+a)/\Gamma(k+b)$ tends to $k^{a-b}$ asymptotically, 
 \begin{align*}
 \rho(k) \sim \frac{\Gamma(1-d)}{\Gamma(d)} k^{2d-1}, \text{ as}, k \rightarrow \infty.
\end{align*}
For $0 < d < \sfrac{1}{2}$ the sum of autocorrelations diverges, and
hence ARFIMA(0,d,0) is a positively correlated LRD process.  For
$-\sfrac{1}{2} < d < 0$ the sum of autocorrelations converges and
$\sum_{k=0}^{\infty} \rho(k) = 0$, and hence the process is CSRD. If
$d=0$ the process is the white noise process $X_n = \epsilon_n$.

The entropy rate of an ARFIMA(0,d,0) process as a function of $H$ is~\cite{feutrill2021differential}
\begin{align*}
	h(\raisedchi) = \frac{1}{2} \log\left(2 \pi e \sigma^2\right) +
        \log&\left(\Gamma\Big(\frac{3}{2} - H\Big)\right)\\
         - &\frac{1}{2} \log\big(\Gamma(2 - 2H)\big),
\end{align*} 
for process variance, $\sigma^2$. Figure~\ref{fig: fgn_entropy_rate}
shows the resulting entropy rate as a function of $H$. For the LRD region
FGN and ARFIMA(0,d,0) have very similar properties, however in the
CSRD region ARFIMA(0,d,0) doesn't diverge to $-\infty$. Hence, even
though the two processes, FGN and ARFIMA(0,d,0) are often considered
as simple alternative models with respect to their autocorrelation,
they have different entropy rate behaviour.  This differing entropy
rate behaviour is one feature we will use to test the effectiveness of
estimators.

\section{Discrete-valued, Discrete-Time Estimation}\label{discrete_entropy}
In this section we introduce a non-parametric estimator based on the Lempel-Ziv compression
algorithm~\cite{ziv1977universal}, which in future sections we can use as the basis of an estimator in the continuous-valued case. The estimation technique is based
on a limit theorem on the frequency of string matches of a given
length. For every position $i$ in a sequence ${\mathbf x}$, and
segment length $\ell \geq 1$ we examine the segment $x_{i}^{i+\ell-1}
= (x_i, x_{i+1}, \ldots, x_{i+\ell -1})$. For each $i$, and a given
window length $n \geq 1$ we we find the longest segment starting from
$i$ that also appears in the window. Equivalently, we can look for the
minimum length of the non-matching sequence as defined below: 
\begin{align*}
	L^n_i({\mathbf x}) = \min\{L: x_i^{i+L-1} \neq x_j^{j+L-1}, 1 \leq j \leq n, j \neq i\},
\end{align*} 
that is, the length of the shortest prefix of $x_i, x_{i+1} ,\ldots$
which is not a prefix of any other $x_j, x_{j+1}, ... $ for $j \leq
n$. A limit theorem was developed by Wyner and Ziv~\cite{wyner1989},
based on string matching, which states
\begin{align*}
\lim\limits_{n \rightarrow \infty} \frac{L^n_i(x)}{\log n} \rightarrow \frac{1}{h(\raisedchi)}, \text{ in probability}.
\end{align*}
This can be turned into an estimation technique by truncating the sequence for some finite $n$, and calculating the reciprocal of $\frac{L^n_i(x)}{\log n}$. 
      
\noindent This was extended to almost sure convergence, a.s., by
Ornstein and Weiss~\cite{ornstein1993entropy}, meaning that prefix
sequences that do not converge have probability 0. These approaches, however, suffer from the issue of basing the estimation on one long sequence of data. Utilising the idea
of the theorem above, estimation techniques were developed which utilise
multiple substrings and average the $L^n_i$'s instead of estimating
from one long string. The
following statement, by Grassberger~\cite{grassberger1989estimating}, that the estimator is consistent, was suggested as almost sure convergence heuristically,
\begin{align}\label{grassberger_limit}
\lim\limits_{n \rightarrow \infty} \frac{\sum_{i=1}^{n} L_i^n(x)}{n \log n} = \frac{1}{h(\raisedchi)}.
\end{align}

\noindent This expression was shown not to be true in general by
Shields~\cite{shields1992}, except in the cases of independent and
identically distributed (i.i.d.) processes and Markov chains, with almost sure convergence. However,
a weaker version does hold for general ergodic processes, which states
that for a given $\epsilon > 0$, that all but a fraction of size at
most $\epsilon$ of the ${\sum_{i=1}^{n} L_i^n(x)}/{n \log n}$, are
within the same $\epsilon$ of $1/h(\raisedchi)$, in the limit as $n \rightarrow \infty$~\cite{shields1992}.

This is converted to an estimation technique by taking a suitably
large $n$ and the reciprocal of the above expression for $\hat{h}$.
However, we aim to make consistent estimates for processes with strong correlations where the statement from Grassberger, (\ref{grassberger_limit}) doesn't hold, hence we require estimators that are consistent for a wider variety of correlation structures. The estimator remains biased, even taking the asymptotic limits. Kontoyiannis and
Suhov~\cite{kontoyiannis_soukhov1994}, and Quas~\cite{quas1999entropy}
extended this approach to a wider range of processes on a discrete alphabet $\mathcal{A}$, firstly
stationary ergodic processes on a finite alphabet that obey a Doeblin condition,
\emph{i.e.}, there exists an integer $r \ge 1$ and a real number
$\beta \in (0,1)$ such that for all $x_0 \in \mathcal{A}, Pr(X_0 = x_0 |
X^{-r}_{-\infty}) \le \beta$, with probability one, and secondly to
processes with infinite alphabets and to random fields satisfying the
Doeblin condition. The estimator from these limit theorems is then consistent and becomes unbiased in the asymptotic limit.
  
In subsequent sections we will quantise continuous valued processes to discrete processes, and then make estimates of the Shannon entropy rate, and then convert these to differential entropy rate estimates. Hence, the properties that have been shown, such as the consistency and bias, are properties that we will be looking to preserve to make reliable non-parametric estimates.

\section{Continuous valued, discrete time Estimation}

We consider some non-parametric estimators of entropy rate for continuous-valued data in two different classes, relative measures, that we can use for comparison of complexity of a system, and absolute measures, which are intended to accurately estimate the value of differential entropy rate for a system.

% Describe that most techniques come from the background of complexity, i.e. a way of comparing time series, rather than get absolute entropy
These techniques have all been developed to quantify the complexity of continuous-valued time series, and hence the intention is to compare time series as opposed to provide an absolute estimate. These types of measures, from dynamic systems literature, and have been successful in the analysis of signals to detect change, in a variety of applied contexts~\cite{alcaraz2010review, chen2006comparison, lake2002sample}. 

The final technique we consider, Specific Entropy~\cite{darmon2016specific}, is an absolute measure of the entropy rate. Due to computational advances, the technique uses non-parametric kernel density estimation of the conditional probability density function, based on a finite past, and uses this in a ``plug in" estimator.

\subsection{Approximate Entropy}\label{approx_ent}

Approximate entropy was introduced in Pincus~\cite{pincus1991approximate}, with the intention of classifying complex systems, however it has been used to make entropy rate estimates, due to the connection with entropy rate for finite Markov chains and i.i.d. processes. Given a sequence of data, $x_1, x_2, ..., x_N$, we have parameters $m$ and $r$, which represent the length of the substrings we will define and the maximum distance, according to a distance metric, between substrings to be considered a match. Then we create a sequence of substrings, $u_1 = [x_1, ... , x_m], u_2 = [x_2, ... , x_{m+1}], ... , u_{N- m + 1} = [x_{N-m + 1}, ... , x_N]$. First we define a quantity, 
\begin{align*}
C^m_i(r) &= \frac{1}{N-m+1} \sum_{j=1}^{N-m+1} \mathbbm{1}_{\{d[u_i, u_j] \le r\}},
\end{align*}

\noindent where $d[x(i), x(j)]$ is a distance metric. Commonly used metrics for this measure are the $l_\infty$ and $l_2$ distances,
\begin{align*}
	l_\infty(x_i, x_j) &= \max_{k = 0, ... , m-1} |x_{i + k} - x_{j + k}|, \text{ or},\\
	l_2(x_i, x_j) &= \sqrt{(x_i - x_j)^2 + ... + (x_{i+m-1} - x_{j+m-1})^2}.
\end{align*}

The following quantity, used in the calculation of the Approximate Entropy, is defined in Eckmann and Ruelle~\cite{eckmann1985ergodic}, 
\begin{align*}
\Phi^m(r) &= \frac{1}{N-m+1} \sum_{i=1}^{N-m+1} \log C^m_i(r).
\end{align*}

Then we define Approximate entropy, $ApEn(m, r)$ as
\begin{align*}
ApEn(m, r) &= \lim\limits_{N \rightarrow \infty} [\Phi^m(r) - \Phi^{m+1}(r)].
\end{align*}

Therefore each estimate of approximate entropy is with respect to a finite amount of data, of length $N$, and not the true value for a system. This is a biased statistic, which arises from the the calculation of $C^m_i(r)$ quantities, where a substring is counted twice, and the application of the logarithm function, which is concave, contributing to the bias, since $E[\log(X)] \le \log(E[X])$ by Jensen's inequality~\cite{delgado-bonal2019}. The bias in this estimator decreases as the number of samples, $N$, gets larger~\cite{delgado-bonal2019}, and this has been shown to bias towards having more complexity in a time series~\cite{richman2000physiological}.

Pincus showed in his initial paper, that the Approximate Entropy would converge to the entropy rate for i.i.d. and finite Markov chains~\cite{pincus1991approximate}. However, our intention is to make accurate estimates for stochastic processes with a more complex correlation structure, which we provide in Section~\ref{testing_measures}.
Approximate entropy is quite sensitive to the two parameters, $m$, and $r$, and hence care must be taken when selecting these parameters~\cite{delgado-bonal2019, yentes2013appropriate}. From previous investigations, it is recommended that $m$ has a relatively low value, \emph{i.e.}, 2 or 3, which will ensure that the conditional probabilities can be estimated reasonably well. The recommended values for $r$, are in the range of $0.1\sigma - 0.25\sigma$, where $\sigma$ is the standard deviation~\cite{delgado-bonal2019}. In this paper, we will be using the values of $m = 3$ and $r=0.2$, as we use the variance $\sigma^2 = 1$, for all processes. 

\subsection{Sample Entropy}

A closely related technique for estimating the entropy rate is sample entropy~\cite{richman2000physiological}, which was developed to address some issues in approximate entropy. It is highly dependent on the data length of the time series and the lack of relative consistency, \emph{i.e.}, that the approximate entropy of a time series is consistently smaller than another for all values of $r$, which is an important property which will give us confidence in our complexity measurements. The sample entropy is a simpler algorithm than approximate entropy, that is quicker to make an estimate and eliminating self-matches in the data.

We now define sample entropy, by using very similar objects to approximate entropy. Given a time series, $x_1, ... , x_N$, of length $N$, we again calculate substrings $u^m_i = [x_i, ... , x_{i+m-1}]$ of length $m$, and the parameter, $r$, for the maximum threshold between strings for comparison. We calculate
\begin{align*}
A = \sum_{i=0}^{N-m} \mathbbm{1}_{\{d[u^{m+1}_i, u^{m+1}_j] < r\}}, B = \sum_{i=0}^{N-m+1} \mathbbm{1}_{\{d[u^m_i, u^m_j] < r\}},
\end{align*}

\noindent where $d[u^m_i, u^m_j]$ is a distance metric. Finally we define SampEn as,
\begin{align*}
SampEn &= -\log (A/B).
\end{align*}

As $A$ will be always less than or equal to $B$, this value will always be non-negative.

Sample entropy removes the bias that is introduced via the extra counting of vectors in the formation of the statistic, however it doesn't reduce the source of bias that is introduced by the correlation of the substrings~\cite{richman2000physiological, delgado-bonal2019}.

The same parameter selection issues as approximate entropy also apply~\cite{delgado-bonal2019}, and hence we choose $m=3$ and $r=0.2$.

\subsection{Permutation Entropy}

Permutation entropy is a complexity measure of time series that uses order statistics to estimate and classify the complexity of observed time series. This is a relative measure, however we expect it to pick up the trend of the entropy rate function. The technique was first defined by Bandt and Pompe~\cite{bandt2002permutation}, and subsequent investigations into order statistics which further develops the theory that was used in permutation entropy~\cite{bandt2007order}.

Now we will define Permutation entropy, first by defining the order statistics required. Given a time series, $x_1, ... , x_N$, which we assume to be stationary, we will be considering permutations, $\pi \in \Pi$ of the substrings of length, $n$, of which there are $n!$ different permutations. For each permutation, $\pi$, we define the relative frequency as,
\begin{align*}
p(\pi) &= \frac{|\{t | t \le N - n, x_{t+1}, ... , x_{t+n}\text{ has type }\pi|\}}{N - n +1}.
\end{align*}

\noindent Hence, we are working with approximations to the real probabilities, however we could recover these by taking the limit as $N \rightarrow \infty$ by the Law of Large Numbers~\cite[pg. 73]{Durrett2010} using a characteristic function on the permutation, with a condition on the stationarity of the stochastic process. 

The permutation entropy of a time series, of order $n \ge 2$, is then defined as,
\begin{align*}
H(n) &= -\sum_{\pi \in \Pi} p(\pi)\log p(\pi).
\end{align*}

Permutation entropy has one parameter, the order $n$. The number of permutations for an order scales as $n!$, which provides a time complexity issue as the require computations grows very quickly in the size of the order. Hence, the minimum possible data required to observe all of the possible permutations of order $n$, is $n!$ data. However, it is claimed that the Permutation Entropy is robust to the order of the permutations used~\cite{bandt2002permutation}. To lower the complexity and provide robust measurements we will use $n=3$ in this paper.

\subsection{Specific Entropy}

Specific entropy was defined by Darmon~\cite{darmon2016specific}, to provide a differential entropy rate estimation technique that is on a stronger statistical footing than the previously defined estimation techniques.  The approach is to consider the short-term predictability of a sequence, by utilising a finite history of values to create a kernel density estimate of the probability density function, then use the kernel density estimate to plug in to the differential entropy rate formula in Definition~\ref{differential_entropy}. For the calculation of this quantity, we define a parameter, $p$, which is the length of the history that is considered in the kernel density estimation of the conditional probability density function. 

The definition of the specific entropy rate, makes a finite truncation of the conditional entropy version of the entropy rate, from Theorem~\ref{conditional_entropy_rate}. This assumes stationarity of the conditional distributions used, up until $p$ observations in the past. Darmon~\cite{darmon2016specific}, shows that the conditional entropy up to order $p$, depends on the state specific entropy of a particular past $(x_p, \ldots, x_1) = \mathbf{x^p_1}$ and the density of the possible pasts $(X_p, \ldots, X_1) = \mathbf{X^p_1}$~\cite{darmon2016specific}. This is shown by an argument which establishes that, 
\begin{align*}
	h\left(X_t | \mathbf{X^{t-1}_{t-p}}\right) = - E\left[E\left[\log f \left(X_t | \mathbf{X^{t-1}_{t-p}}\right)\right]\right].
\end{align*}

Given this relationship and the law of total expectation, they define the specific entropy rate, of order $p$, $h_t^{(p)}$, as
\begin{align*}
	h_t^{(p)} &= h\left(X_t | \mathbf{X^{t-1}_{t-p}} = \mathbf{x^{t-1}_{t-p}}\right),\\
	%&= -E\left[\log f \left(X_t | \mathbf{X^{t-1}_{t-p}}\right)\right],\\
	&= -\int_{-\infty}^{\infty} f\left(x_{p+1} | \mathbf{x^p_1}\right) \log f\left(x_{p+1} | \mathbf{x^p_1}\right) dx_{p+1}.
\end{align*}

Hence, the specific entropy rate estimator, $\hat{h}_t^{(p)}$, is defined by plugging in the estimate of the density obtained by kernel density estimation, $\hat{f}\left(x_{p+1} | \mathbf{x^p_1}\right)$, is
\begin{align*}
	\hat{h}_t^{(p)} = -E\left[\log \hat{f}\left(x_{p+1} | \mathbf{x^p_1}\right)\right].
\end{align*}

\noindent Then the estimate of the differential entropy rate of order $p$, $\hat{h^{(p)}}$, is defined as
\begin{align*}
	\hat{h}^{(p)} &= \frac{1}{T-p} \sum_{t = p}^{T} \hat{h}_t^{(p)},
	%&= \frac{1}{T-p} \sum_{t = p}^{T} -E\left[\log \hat{f}\left(x_{p+1} | \mathbf{x^p_1}\right)\right],
\end{align*}
which is the time average of all the specific entropy rates across observed states.

% Write about how all these techniques rely on other parameters, whereas the approach we define just takes a string of data from a process.

Specific entropy relies on some parameters to construct the kernel density estimation, which are the length of the past, $p$ and the $p+1$ bandwidths, $k_1, \ldots, k_{p+1}$ that are used in the kernel density estimation~\cite{darmon2016specific}. The parameter choice can have large impacts on the quality of the estimation. The suggested technique for selecting $p$ is a cross-validation technique which removes an individual observation and $l$ observations either side. A suggested approach is to take $l=0$ and only remove the individual observation~\cite{darmon2018information}. In practice, it is advised to fix a $p$ and then calculate the bandwidths due to the computational complexity of the cross-validation~\cite{darmon2016specific}. In this work to balance the capturing of the strong correlations and the complexity we choose $p=10$.

\section{Accuracy of Continuous-Value Estimators}\label{testing_measures}
 
In this section we will analyse samples of continuous valued, discrete
time long range dependent data using the existing continuous-value
entropy rate estimators. Most analysis and tests of Approximate and
Sample entropy have been based on i.i.d. processes or finite state
Markov chains \ie processes with either very short or no
correlations.  However, many real processes, in particular the types of
processes for which complexity measures are useful, exhibit long-range
correlations. To create sample data for testing we created 50 samples of 2000 data points, and averaged the estimates to get the values presented. Figure~\ref{fig: comparison_approx_sample_perm} shows
just how bad common estimators are for FGN. Both the shape and scale
of the entropy estimate curves are quite wrong (with the exception of
the shape of the sample entropy).

\begin{figure}[t]
  \centering
	\includegraphics[width=0.9\linewidth]{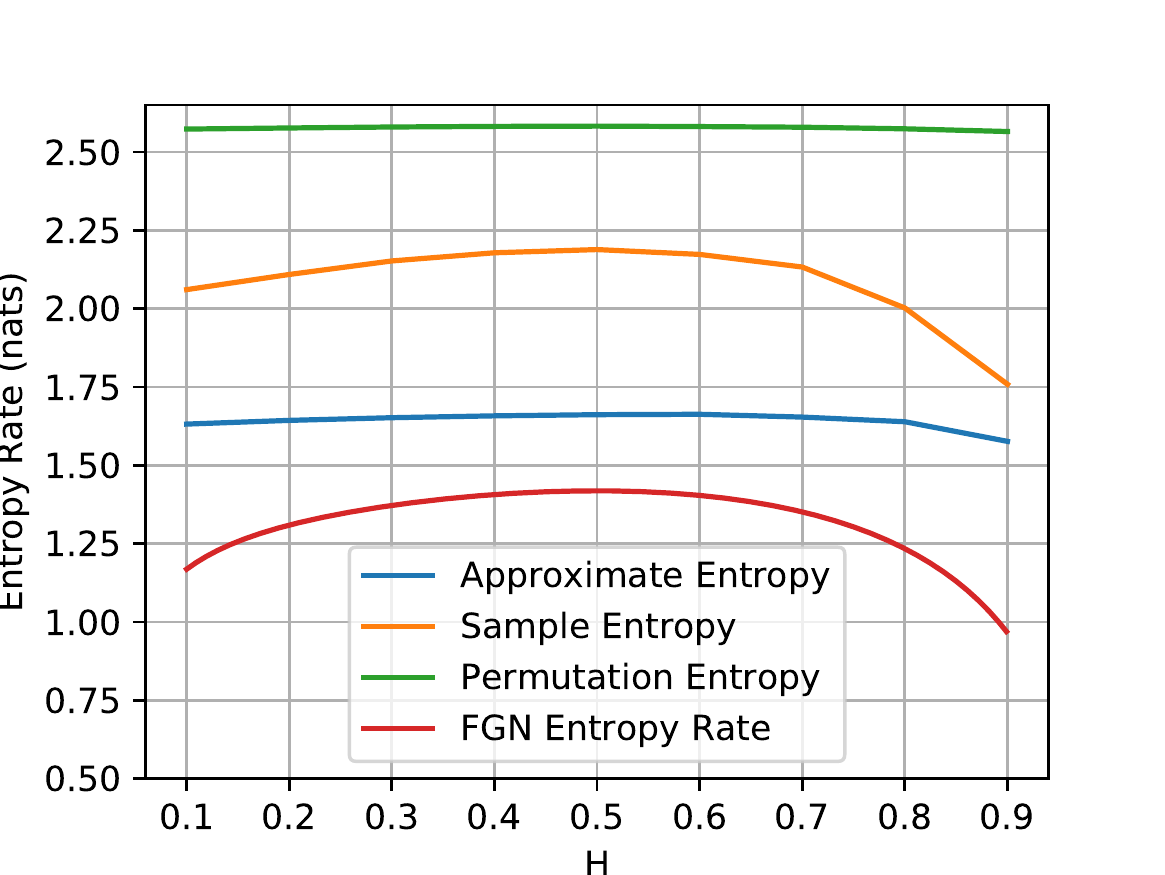}
	\caption{Approximate, sample and permutation entropy estimates
          for FGN. Note the wide discrepencies.}
	\label{fig: comparison_approx_sample_perm}
\end{figure}

Figures~\ref{fig: approx_entropy} and~\ref{fig: sample_entropy} show
the entropy rate estimates for the Sample and Approximate Entropy for
both Fractional Gaussian Noise and ARFIMA(0,d,0) for two different
parameters $m=2,3$ and $r=0.2$ as recommended in Delgado-Bonal and
Marshak~\cite{delgado-bonal2019}.

Sample Entropy approximates the the shape of the real entropy rate
functions but provides large overestimates. This reinforces the use of
Sample Entropy as a relative measure of complexity of a time series,
as long as it is not used generally for the estimation of differential
entropy rate.

Approximate Entropy, however, fails to even approximate trend or range
of values. Interestingly, this technique is also quite sensitive to
changes in the value of $m$.

Figure~\ref{fig: permutation} shows the Permutation Entropy of order
$n=3$. These results are indicate that permutation entropy is not a
good choice for strongly correlated process as all the behaviour
exists on a very small scale, with small differences across the range of $H$. The maximum
Permutation Entropy for $n=3$, is $H(\pi) = \log_2(3!) \approxeq
2.585$, and all of the estimates are within 0.2 of the maximum
value. Similar to Approximate and Sample
entropy, from Figure~\ref{fig: comparison_approx_sample_perm} although
the trend looks promising on a small scale, once we compare the
behaviour to the actual entropy rate, it is a poor estimator of the
actual entropy rate.

\newlength{\thisfigwidth}
\setlength{\thisfigwidth}{0.65\linewidth}

\begin{figure*}[t]
  \centering
  \begin{subfigure}[t]{\thisfigwidth}
	\includegraphics[width=\thisfigwidth]{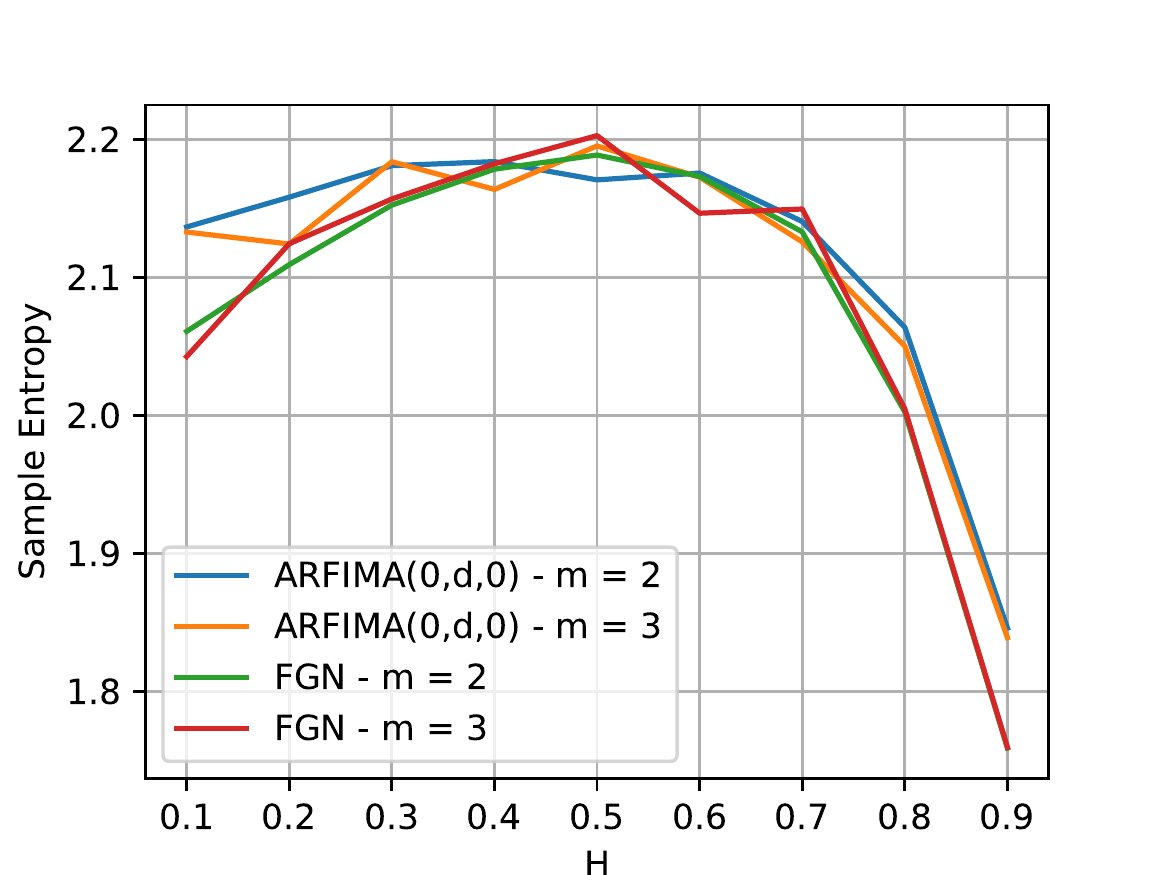}
	\caption{Sample Entropy ($r=0.2$) shows the rough trend of the
          LRD processes, however it overestimates the differential
          entropy by a large amount.}
	\label{fig: sample_entropy}
  \end{subfigure}
  \hfill
  \begin{subfigure}[t]{\thisfigwidth}
    \includegraphics[width=\textwidth]{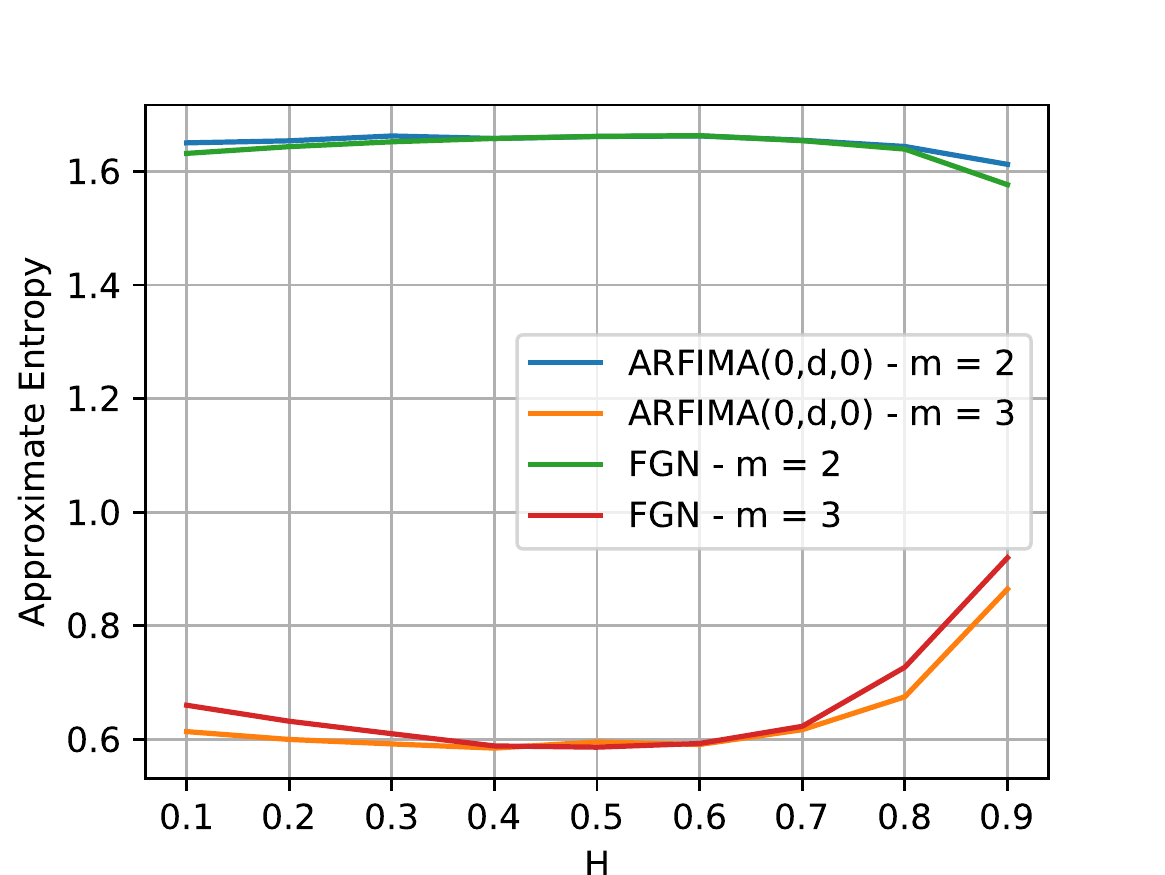}
	\caption{Approximate Entropy ($r=0.2$) fails to even approximate trend
          or range of values. Interestingly, this technique is also quite sensitive
          to changes in the value of $m$.} 
	\label{fig: approx_entropy}
  \end{subfigure}
  \hfill
  \begin{subfigure}[t]{\thisfigwidth}
	\centering
	\includegraphics[width=0.98\linewidth]{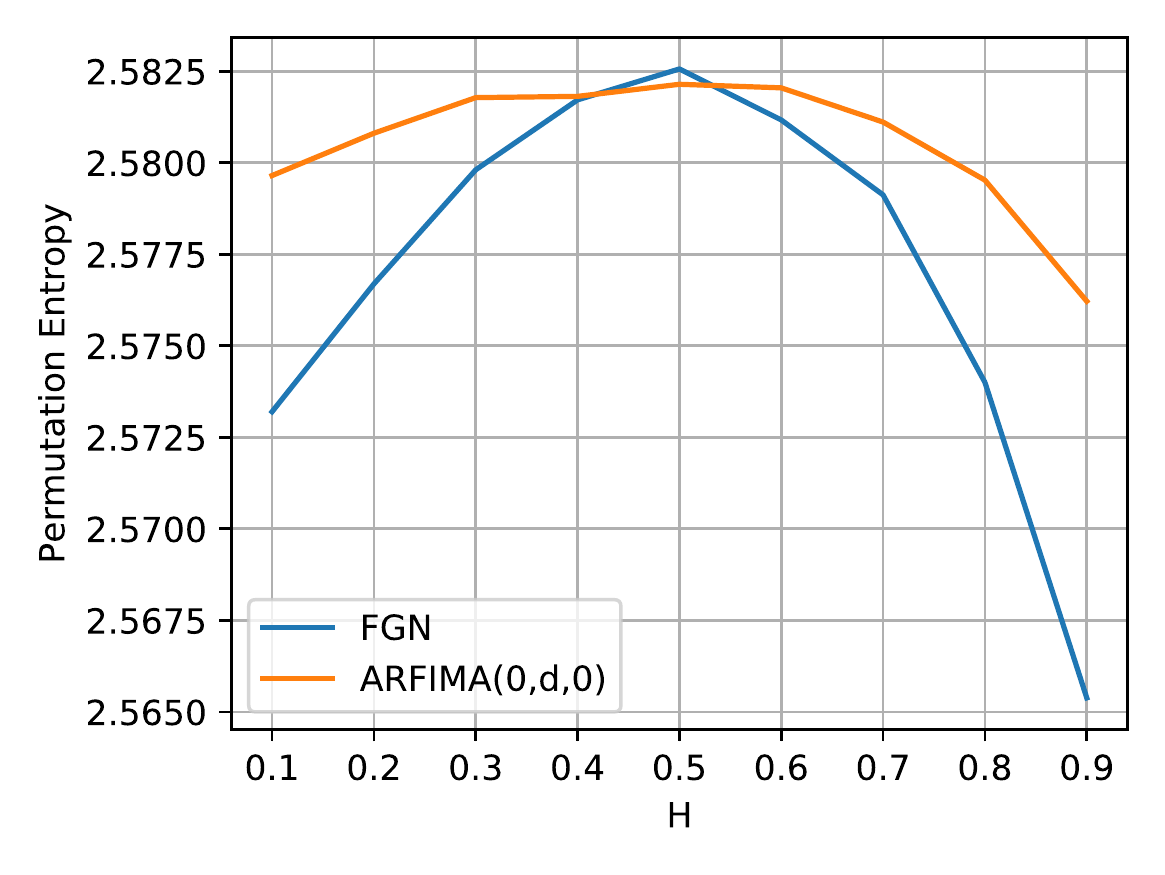}
	\caption{Permutation Entropy (order 3) appears to capture the
          trend of the entropy rate function, however the scale of the
          changes is wrong. Noting that the maximum Permutation
          Entropy is $\log_2(3!) \approx 2.585$ the range displayed is
          very small.}
	\label{fig: permutation}
  \end{subfigure}
  \caption{Detailed entropy-rate estimates of FGN and ARFIMA(0,d,0)
    with process variance $\sigma^2 = 1$.}
\end{figure*}

Naively, these estimates can be improved by extending the lengths of
the windows being used, however, for LRD processes large windows would
be required. Unfortunately the computational complexity of the
approaches (see Table~\ref{tab:estimator_comparison}) grows with
increased window sizes and we see stability problems at least with
Approximate Entropy, so there does not appear to be a suitable
trade-off between computational cost and accuracy. 

Estimates derived from Specific Entropy are shown in Figures~\ref{fig:
  specific_fgn} and~\ref{fig: specific_arfima}, with $p=10$. Specific
entropy provides good agreement with the entropy rate for LRD FGN and
ARFIMA(0,d,0), \emph{i.e.},$H > \frac{1}{2}$, with greater divergence
occuring in the CSRD parameter range for small $H$ values. However,
note that the computational cost for these estimates is very high. 

%% \begin{figure}[t]
%% 	\centering
%% 	\includegraphics[width=\thisfigwidth]{Plots/sample_entropy.pdf}
%% 	\caption{Sample Entropy estimates of FGN and ARFIMA(0,d,0) with process variance, $\sigma^2 = 1$. The parameter choices are the same as Approximate Entropy, $m=2,3$ and $r=0.2$. This technique shows the trend of the LRD processes, however they are large overestimates of the differential entropy. The changes are on a relatively small scale of 1.8-2.2, but the large changes occur when $H > 0.9$ as the differential entropy rate approaches $-\infty$ asymptotically. This technique seems more robust to changes in $m$ than Approximate Entropy, as the small change had no major effect on the estimates.}
%% 	\label{fig: sample_entropy}
%% \end{figure}

%% \begin{figure}[p]
%% 	\centering
%% 	\includegraphics[width=\thisfigwidth]{Plots/permutation.pdf}
%% 	\caption{Comparison of the Permutation Entropy estimates to the actual differential entropy rate for Fractional Gaussian Noise and ARFIMA(0,d,0). These were calculated with the permutations of order 3. This shows that the estimates seem to capture the trends of the entropy rate function, however they exist on extremely small scales, so it's not clear whether the it is a real effect or noise. With the maximum uncertainty value being at $\log_2(3!) \approx 2.585$, which indicates all of the estimates are close to the value of maximum uncertainty.}
%% 	\label{fig: permutation}
%% \end{figure}
 
\newlength{\thisfigwidthb}
\setlength{\thisfigwidthb}{0.85\linewidth}

\begin{figure*}[t]
  \begin{subfigure}[t]{\thisfigwidthb}
	\centering
	\includegraphics[width=\thisfigwidthb]{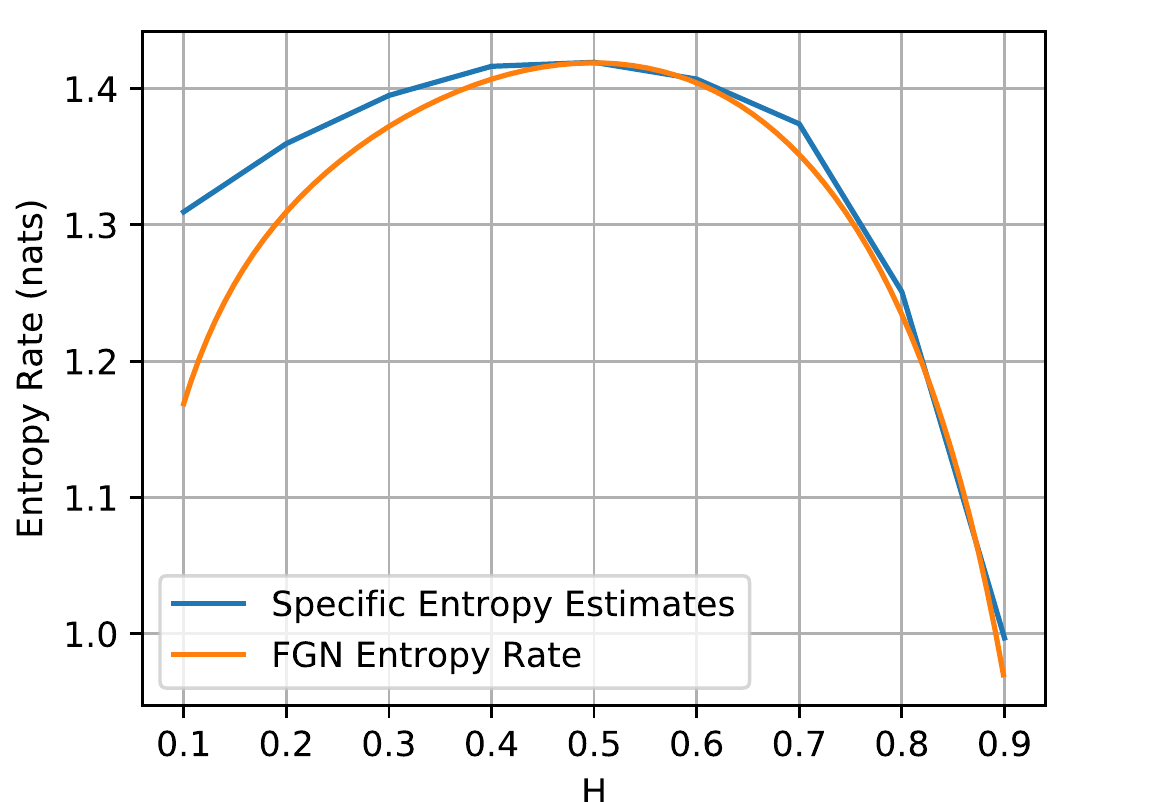}
	\caption{FGN.} 
	\label{fig: specific_fgn}
  \end{subfigure} 
  \hfill
  \begin{subfigure}[t]{\thisfigwidthb}
	\centering
	\includegraphics[width=0.97\linewidth]{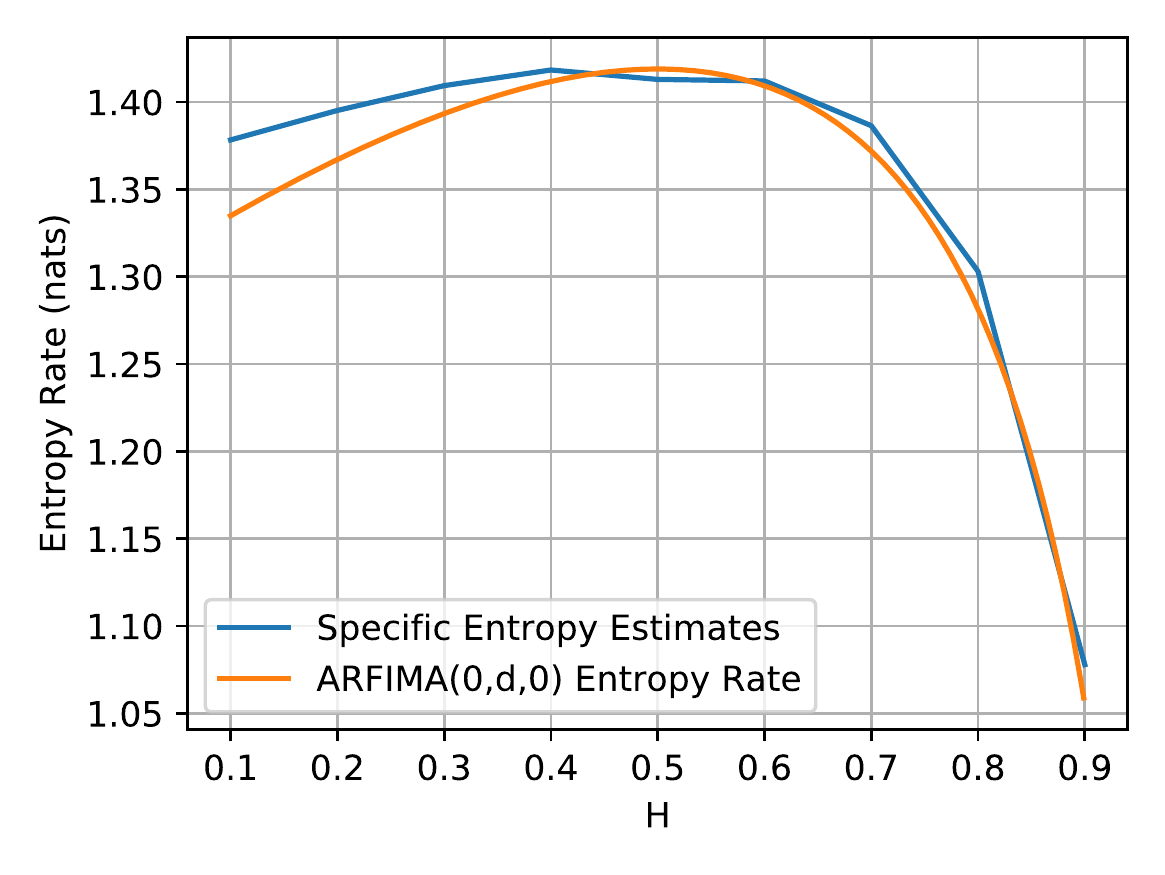}
	\caption{ARFIMA(0,d,0).} 
	\label{fig: specific_arfima}
  \end{subfigure}
  \caption{Specific Entropy ($p=10$). These estimates
          are very good over the range of $H$, however the estimates
          start to diverge for smaller values of $H$.}
\end{figure*}

\section{Shannon and Differential Entropy}

In this section, we make a connection between the differential and Shannon entropy rate,
which we use as the basis of an estimation technique. We propose an
approach where we quantise the continuous valued process into a
discrete valued process, for a defined quantisation window size,
$\Delta$, then apply estimators from the discrete valued domain, and
then translate back to differential entropy.

Given the definitions of Shannon and Differential entropy, Definitions~\ref{shannon_entropy} and~\ref{differential_entropy} respectively, we utilise a connection that exists between these quantities for a quantised version of continuous data, which is given by the following theorem~\cite[pg. 248]{cover_thomas_2006}.

\begin{theorem}[Theorem 8.3.1 Cover and Thomas~\cite{cover_thomas_2006}]
	If the density $f(x)$ of a random variable $X$ is Riemann integrable, then
	\begin{align*}
	H(X^\Delta) + \log(\Delta) \rightarrow h(X), \text{ as }\Delta \rightarrow 0.
	\end{align*}
	Therefore the entropy of an n-bit quantisation of a continuous random variable $X$ is approximately $h(X) + n$, when using $\log_2$ in the expression above.
\end{theorem}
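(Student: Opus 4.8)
The plan is to make precise the heuristic "quantise, then rescale." Fix a bin width $\Delta$, write the $i$-th bin as $B_i = [i\Delta,(i+1)\Delta)$, and set $p_i = \int_{B_i} f(x)\,dx$, so that the quantised variable $X^\Delta$ takes a representative value in $B_i$ with probability $p_i$. Since $f$ is Riemann integrable, on each bin one has $\bigl(\inf_{B_i} f\bigr)\Delta \le p_i \le \bigl(\sup_{B_i} f\bigr)\Delta$, so there is a value $c_i$ between $\inf_{B_i} f$ and $\sup_{B_i} f$ with $p_i = c_i\Delta$; choosing a tag $x_i \in B_i$ for which $f(x_i)$ plays the role of $c_i$ (exactly so when $f$ is continuous on $B_i$, and up to the oscillation of $f$ on $B_i$ in general) is all the Riemann-sum argument below requires.

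Next I would compute directly. Writing $H(X^\Delta) = -\sum_i p_i\log p_i$, discarding the bins with $p_i = 0$ under the convention $0\log 0 = 0$, and substituting $p_i = f(x_i)\Delta$,
\begin{align*}
H(X^\Delta) &= -\sum_i f(x_i)\Delta\,\log\bigl(f(x_i)\Delta\bigr)\\
 &= -\sum_i \Delta\,f(x_i)\log f(x_i) \;-\; \Bigl(\sum_i f(x_i)\Delta\Bigr)\log\Delta.
\end{align*}
Because $\sum_i f(x_i)\Delta = \sum_i p_i = 1$, the last term is exactly $-\log\Delta$, so
\begin{align*}
H(X^\Delta) + \log\Delta = -\sum_i \Delta\,f(x_i)\log f(x_i),
\end{align*}
which is precisely a Riemann sum for $-\int f(x)\log f(x)\,dx = h(X)$.

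The remaining step is to let $\Delta \to 0$ and conclude that this Riemann sum converges to $h(X)$, and this is where the real work — and the main obstacle — lies: Riemann integrability of $f$ does not by itself give integrability of $g(x) = -f(x)\log f(x)$, and the support of $f$ may be unbounded. On a bounded interval on which $f$ is bounded the conclusion is routine, since $t \mapsto -t\log t$ is continuous on $[0,M]$ (with value $0$ at $0$), so $g$ is Riemann integrable there (composition of a continuous function with a Riemann integrable one) and the tagged sums converge to the integral as $\Delta \to 0$. For the general case I would assume, as the statement implicitly does, that $h(X)$ exists as a finite (possibly improper) integral and that $g$ is improperly Riemann integrable, then fix $\epsilon > 0$, split the index set into a central block where $f$ is bounded away from its singularities and the above convergence applies, and a tail/singular remainder whose contribution to both $H(X^\Delta) + \log\Delta$ and to $h(X)$ is at most $\epsilon$ uniformly in small $\Delta$; letting $\epsilon \to 0$ finishes the argument.

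Finally, the displayed corollary in the statement is immediate from the limit: taking $\Delta = 2^{-n}$ and $\log = \log_2$ gives $H(X^\Delta) \approx h(X) - \log_2(2^{-n}) = h(X) + n$.
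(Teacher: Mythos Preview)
Your argument is correct and follows the same approach the paper relies on. Note that the paper does not actually prove this theorem --- it is simply quoted from Cover and Thomas --- but the paper's proof of the subsequent entropy-rate extension (Theorem~\ref{shannon_differential_entropy_rate_link}) uses exactly your strategy: apply the mean value theorem on each bin to write $p_i = f(x_i)\Delta$, expand $H(X^\Delta)$, split off the $-\log\Delta$ term using $\sum_i f(x_i)\Delta = 1$, and recognise the remaining piece as a Riemann sum for $h(X)$.

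Your treatment is in fact more careful than either the paper or Cover and Thomas on one point: you flag that Riemann integrability of $f$ does not automatically yield (improper) Riemann integrability of $-f\log f$, and you sketch the truncation-plus-tail argument needed to make the limit rigorous when the support is unbounded or $f$ is unbounded. The paper, like the source it cites, simply asserts the Riemann-sum convergence without addressing this, so your proposal is a mild strengthening of the standard presentation rather than a departure from it.
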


 Hence, if we quantise and apply Shannon entropy estimators we may be able to make a useful estimation of differential entropy, particularly with finer quantisations and more data. We extend this relationship to the case of Shannon entropy rate and differential entropy rate, which we clarify in the following theorem.

\begin{theorem}\label{shannon_differential_entropy_rate_link}
	If the joint density function, $f(x_1, ... , x_n)$, of the stochastic process, $\raisedchi = \{X_m\}_{m \in \mathbb{N}}$ is Riemann integrable $\forall n \in \mathbb{N}$, then
	\begin{align*}
	H(\raisedchi^\Delta) + \log(\Delta) \rightarrow h(f) = h(\raisedchi), \text{ as } \Delta \rightarrow 0.
	\end{align*}
	Therefore, the entropy rate of an n-bit quantisation of a continuous-valued, discrete-time stochastic process, $\raisedchi$ is approximately $h(\raisedchi) + n$, when using $\log_2$ in the expression above.
\end{theorem}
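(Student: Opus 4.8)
The plan is to reduce the claim to the single-variable quantisation theorem (Theorem 8.3.1 of~\cite{cover_thomas_2006}) applied block-by-block, and then to take the entropy-rate limit. First I would fix $n$ and view $(X_1,\dots,X_n)$ as a single random vector on $\mathbb{R}^n$ carrying the Riemann-integrable joint density $f(x_1,\dots,x_n)$. Quantising each of the $n$ coordinates at scale $\Delta$ partitions $\mathbb{R}^n$ into cells of volume $\Delta^n$, so the multivariate form of Theorem 8.3.1 (the same mean-value/Riemann-sum argument, now carried out in $n$ dimensions, using that $-\sum \Delta^n f\log f$ and $\sum\Delta^n f$ converge to $-\int f\log f$ and $1$ respectively) yields
\begin{align*}
H\!\left(X_1^{\Delta},\dots,X_n^{\Delta}\right) + n\log\Delta \;\longrightarrow\; h(X_1,\dots,X_n) \quad\text{as } \Delta\to 0,
\end{align*}
for every fixed $n$; dividing by $n$ gives $\tfrac1n H(X_1^{\Delta},\dots,X_n^{\Delta}) + \log\Delta \to \tfrac1n h(X_1,\dots,X_n)$.

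Next I would let $n\to\infty$. Since the quantiser $x\mapsto\Delta\lfloor x/\Delta\rfloor$ is a fixed coordinate-wise map, $\raisedchi^{\Delta}=\{X_i^{\Delta}\}$ is a discrete-valued stationary process, so $\tfrac1n H(X_1^{\Delta},\dots,X_n^{\Delta})\to H(\raisedchi^{\Delta})$, while $\tfrac1n h(X_1,\dots,X_n)\to h(\raisedchi)$ by definition of the differential entropy rate. Writing $g_n(\Delta):=\tfrac1n H(X_1^{\Delta},\dots,X_n^{\Delta})+\log\Delta$, what has been shown is that $\lim_{n\to\infty}g_n(\Delta)=H(\raisedchi^{\Delta})+\log\Delta$ for each $\Delta$ and $\lim_{\Delta\to0}g_n(\Delta)=\tfrac1n h(X_1,\dots,X_n)$ for each $n$, and the theorem is precisely the assertion that these two iterated limits agree, with common value $h(\raisedchi)$.

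The main obstacle is therefore the interchange of the $\Delta\to0$ and $n\to\infty$ limits; everything else is bookkeeping. I would argue it by a Moore--Osgood argument: it suffices to show that $g_n(\Delta)\to\tfrac1n h(X_1,\dots,X_n)$ as $\Delta\to0$ \emph{uniformly in} $n$, i.e.\ that the Riemann-sum error $\bigl|g_n(\Delta)-\tfrac1n h(X_1,\dots,X_n)\bigr|$ admits a bound that is independent of $n$ and vanishes with $\Delta$ --- this is exactly where the hypothesis that \emph{every} finite-dimensional density is Riemann integrable must be used, and I expect it to be the delicate step. Stationarity provides a convenient shortcut: both $\tfrac1n H(X_1,\dots,X_n)$ and the conditional entropies $H(X_n^{\Delta}\mid X_{n-1}^{\Delta},\dots,X_1^{\Delta})$ are nonincreasing in $n$, so one may instead pass through the conditional-entropy characterisation of Theorem~\ref{conditional_entropy_rate}: applying the block identity above at $n$ and at $n-1$ and subtracting gives $H(X_n^{\Delta}\mid X_{n-1}^{\Delta},\dots,X_1^{\Delta})+\log\Delta\to h(X_n\mid X_{n-1},\dots,X_1)$ as $\Delta\to0$, and monotonicity in $n$ of both sides lets one squeeze the iterated limits together. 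Finally, the closing remark is immediate: an $n$-bit quantiser corresponds to $\Delta=2^{-n}$, so in base-$2$ logarithms the identity reads $H(\raisedchi^{\Delta})\approx h(\raisedchi)-\log_2\Delta=h(\raisedchi)+n$.
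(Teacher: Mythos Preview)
Your approach is essentially the same as the paper's: establish the $n$-dimensional analogue of Theorem~8.3.1 via the mean-value/Riemann-sum argument to obtain $H(X_1^{\Delta},\dots,X_n^{\Delta})+n\log\Delta\to h(X_1,\dots,X_n)$ for each fixed $n$, divide by $n$, and then send $n\to\infty$. The paper, however, does not address the interchange of the $\Delta\to0$ and $n\to\infty$ limits at all --- it simply applies them in sequence without comment --- so your identification of this as the delicate step and your Moore--Osgood/monotonicity discussion are in fact more careful than the published proof.
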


\begin{proof}
	For all finite $n \in \mathbb{N}$, the joint density of the finite collection of random variables is given by $f(x_1, ... , x_n)$. For notational convenience we use 
	\begin{align*}
		f(x_1, ... , x_n) &= f(\mathbf{x}),\\
		f(x_{i^1}, ... , x_{i^n}) &= f(\mathbf{x_i}),\\
		Pr(x_1, ... , x_n) &= Pr(\mathbf{x}),\\
		Pr(x_{i^1}, ... , x_{i^n}) &= Pr(\mathbf{x_i}),\\
		\text{and } dx_1 \ldots dx_n &= d\mathbf{x}.
	\end{align*}
	We partition the range of each random variable, $X$ of
        $\raisedchi$ into bins of length $\Delta$. Now, by the mean
        value theorem, for any probability density function $f(x)$
        there exists $x^{(i^j)} \in [i^j\Delta, (i^j + 1)\Delta)$ in each bin,
            such that 
	\begin{align*}
	f(x_{i^j})\Delta = \int_{i^j\Delta}^{(i^j + 1)\Delta} f(x) dx.
	\end{align*}
	Extending the result to the joint density $\exists \text{ } x_{i^1}, ... , x_{i^n}$, such that
	\begin{align*}
	f(x_{i^1}, ... , x_{i^n})\Delta^n = \int_{i^1\Delta}^{(i^1 + 1)\Delta} ... \int_{i^n\Delta}^{(i^n + 1)\Delta} f(\mathbf{x}) d\mathbf{x}.
	\end{align*}
	We define the quantised stochastic process, $\raisedchi^\Delta = \{X_m^\Delta\}_{m \in \mathbb{N}}$, by defining each of the quantised random variables $\forall i^j$ as,
	\begin{align*}
	X_{i^j}^\Delta = x_{i^j}, \text{ if } i^j\Delta \le X_{i^j} < (i^j + 1)\Delta.
	\end{align*}
	Then the probability that $\left( X_1^\Delta, ... , X_n^\Delta \right) = \left( x_{i^1}, ... , x_{i^n}\right)$, is given by,
	\begin{align*}
	Pr(\mathbf{x_i}) &= \int_{i^1\Delta}^{(i^1 + 1)\Delta} ... \int_{i^n\Delta}^{(i^n + 1)\Delta} f(\mathbf{x}) d\mathbf{x},\\
	&= f(\mathbf{x_i})\Delta^n.
	\end{align*}
	Hence, we consider the joint Shannon entropy of the quantised random variable, 
	\begin{align*}
	H(X_1^\Delta, ... , X_n^\Delta) &= -\sum_{i_1=-\infty}^{\infty}\ldots\sum_{i_n=-\infty}^{\infty} Pr(\mathbf{x_i}) \log(Pr(\mathbf{x_i})),\\
	&= -\sum_{i_1=-\infty}^{\infty}\ldots\sum_{i_n=-\infty}^{\infty} f(\mathbf{x_i})\Delta^n \log(f(\mathbf{x_i})\Delta^n),\\
	&= -\sum_{i_1=-\infty}^{\infty}\ldots\sum_{i_n=-\infty}^{\infty} f(\mathbf{x_i})\Delta^n \log(f(\mathbf{x_i}))\\
	&\qquad - \sum_{i_1=-\infty}^{\infty}\ldots\sum_{i_n=-\infty}^{\infty} f(\mathbf{x_i})\Delta^n \log(\Delta^n).
	\end{align*}
	As the joint density, $H(X_1^\Delta, ... , X_n^\Delta)$, is Riemann integrable $\forall n \in \mathbb{N}$, this implies that as $\Delta \rightarrow 0$,
	\begin{align*}
		-\sum_{i_1=-\infty}^{\infty}\ldots\sum_{i_n=-\infty}^{\infty} &f(\mathbf{x_i})\Delta^n \log(f(\mathbf{x_i}))\\
		 &= -\int_{-\infty}^{\infty}\ldots\int_{-\infty}^{\infty} f(\mathbf{x}) \log f(\mathbf{x}) d\mathbf{x},\\
		&= h(X_1, \ldots , X_n).
	\end{align*}
	Since $f(\mathbf{x_i})$ is a joint density over $\mathbb{R}^n$, this implies that,
	\begin{align*}
		\sum_{i_1=-\infty}^{\infty}\ldots\sum_{i_n=-\infty}^{\infty} f(\mathbf{x_i})\Delta^n &= \int_{-\infty}^{\infty}\ldots\int_{-\infty}^{\infty} f(\mathbf{x_i}) d\mathbf{x}\\
		&= 1.
	\end{align*}
	Therefore, we get as $\Delta \rightarrow 0$,
	\begin{align*}
		H(X_1^\Delta, \ldots , X_n^\Delta) + n\log(\Delta) \rightarrow h(X_1, \ldots , X_n).
	\end{align*}
	Now we consider the limit of the quantised joint differential entropy, as the length of the joint entropy tends to infinity,
	\begin{align*}
	\lim\limits_{n \rightarrow \infty} \frac{1}{n} \left(H(X_1^\Delta, \ldots , X_n^\Delta) + n\log(\Delta) \right)  &\rightarrow\\ \lim\limits_{n \rightarrow \infty} \frac{1}{n} &\left(h(X_1, \ldots , X_n) \right),
	\end{align*}
	and substituting the definitions of Shannon and differential entropy rate respectively, we get
	\begin{align*}
	%\implies \lim\limits_{n \rightarrow \infty}  \frac{H(X_1^\Delta, .\ldots , X_n^\Delta)}{n} + \log(\Delta) &\rightarrow \lim\limits_{n \rightarrow \infty} \frac{h(X_1, \ldots , X_n)}{n},\\
	\implies H(\raisedchi^\Delta) + \log(\Delta) &\rightarrow h(\raisedchi).
	\end{align*}
\end{proof}

There will, however, be an error which is due to the
difference between the real differential entropy rate, and the
approximation of the integral at the quantisation size,
$\Delta$.

Another thing to note is that as the quantisation window gets finer,
\emph{i.e.}, as $\Delta$ gets smaller, the term in the estimator, $\log(\Delta)
\rightarrow -\infty$. There is a potential concern that this disparity
between the correction term and the actual estimate could lead to
numerical errors. Thus, smaller quantisations aren't necessarily
better for practical estimation. 

%% These 4 previously defined techniques rely on some parameter choices
%% to make an estimation. The selection of which has an impact on the
%% outcome of the estimate, which is in contrast to the Shannon entropy
%% rate estimators that were defined in Section~\ref{discrete_entropy}
%% using observed data only. In Section~\ref{quantised_estimation}, we
%% define an estimator that returns an estimate based just on the
%% observed data from a stochastic process, along with the choice of
%% quantised window size, $\Delta$. This technique makes the fewest
%% assumptions and parameter choices to make a differential entropy
%% estimate. 

\section{NPD-Entropy Estimator}\label{quantised_estimation}

Theorem~\ref{shannon_differential_entropy_rate_link} gives us a way to
convert an estimator of entropy rate for discrete-valued,
discrete-time stochastic processes into an estimator for a continuous-valued process. We call this strategy NPD-Entropy estimation and
explore its properties in this section. In particular we develop links
from the properties of the Shannon entropy rate estimator that is
selected to the corresponding NPD-estimator.

\begin{definition}[NPD-Entropy]
	The NPD-Entropy estimator of the differential entropy rate,
        $h(\raisedchi)$, of a continuous-valued, discrete time
        stochastic process, $\raisedchi = \{X_i\}_{i \in \mathbb{N}}$,
        using a Shannon entropy rate estimator,
        $H(\raisedchi^\Delta)$, of the corresponding quantised
        discrete-time, discrete-valued stochastic process
        $\raisedchi^\Delta=\{X_i^\Delta\}_{i \in \mathbb{N}}$ with
        window size, $\Delta$, is defined as
	\begin{align*}
		\hat{h}_{NPD}(\raisedchi) = \hat{H}(\raisedchi^\Delta) + \log\Delta.
	\end{align*}
\end{definition}

We analyse some properties of NPD-Entropy and start by considering the consistency of the estimation technique.

\begin{theorem}\label{consistency}
	NPD-Entropy, $\hat{h}_{NPD}$, is a consistent estimator as
        $\Delta \rightarrow 0$, if and only if the associated Shannon
        entropy rate estimator, $\hat{H}$, is consistent, \ie as
        $\Delta \rightarrow 0$ the following two are equivalent:
	\begin{align*}
	\lim\limits_{n \rightarrow \infty}
        \hat{H}(\raisedchi^\Delta_n) & \rightarrow H(\raisedchi^\Delta) \, a.s. , \\ 
	 \lim\limits_{n \rightarrow \infty} \hat{h}_{NPD}(\raisedchi_n) & \rightarrow h(\raisedchi) \, a.s.,
        \end{align*}
        where $n$ denotes the length of the data sequence to which the
        estimator is applied.
\end{theorem}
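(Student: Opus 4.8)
The plan is to exploit the fact that, for any fixed quantisation width $\Delta$, the NPD-Entropy estimator differs from the underlying Shannon entropy rate estimator only by the \emph{deterministic} constant $\log\Delta$, so that the two almost sure convergence statements are equivalent for essentially trivial reasons, and then to transfer the $n\to\infty$ limits to $h(\raisedchi)$ using Theorem~\ref{shannon_differential_entropy_rate_link}.

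First I would fix $\Delta > 0$ and recall that, by definition, $\hat{h}_{NPD}(\raisedchi_n) = \hat{H}(\raisedchi^\Delta_n) + \log\Delta$. Since $y \mapsto y + \log\Delta$ is a continuous bijection of $\mathbb{R}$, almost sure convergence is preserved in both directions: $\hat{H}(\raisedchi^\Delta_n) \to H(\raisedchi^\Delta)$ a.s. as $n\to\infty$ if and only if $\hat{h}_{NPD}(\raisedchi_n) \to H(\raisedchi^\Delta) + \log\Delta$ a.s. as $n\to\infty$, with both statements holding on the very same probability-one event (the event on which the Shannon estimator converges). This already gives the equivalence of the two displayed convergence statements for each fixed $\Delta$.

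Next I would let $\Delta \to 0$. By Theorem~\ref{shannon_differential_entropy_rate_link}, Riemann integrability of the finite-dimensional joint densities yields $H(\raisedchi^\Delta) + \log\Delta \to h(\raisedchi)$. Hence, if $\hat{H}$ is consistent for every $\Delta$ in a sequence decreasing to $0$, then along that sequence the almost sure $n\to\infty$ limit of $\hat{h}_{NPD}(\raisedchi_n)$ equals $H(\raisedchi^\Delta) + \log\Delta$, which in turn tends to $h(\raisedchi)$; taking a countable union of the exceptional null sets over the sequence keeps the ``almost sure'' qualifier intact. Conversely, since $\hat{H}(\raisedchi^\Delta_n) = \hat{h}_{NPD}(\raisedchi_n) - \log\Delta$ with $\log\Delta$ again deterministic, consistency of $\hat{h}_{NPD}$ to $h(\raisedchi)$ forces $\hat{H}(\raisedchi^\Delta_n)$ to converge a.s. to $h(\raisedchi) - \log\Delta$, which by Theorem~\ref{shannon_differential_entropy_rate_link} equals $H(\raisedchi^\Delta) + o(1)$ as $\Delta\to 0$; so $\hat{H}$ is consistent in the limit $\Delta\to 0$.

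The argument is almost entirely bookkeeping. The only points that require care — and the main (mild) obstacle — are the management of the nested limits ($n\to\infty$ taken first, then $\Delta\to 0$) and the fact that the exceptional null set supplied by the Shannon estimator depends on $\Delta$, so one should restrict $\Delta$ to a countable sequence tending to $0$ in order to retain a single almost sure event. I expect no analytic difficulty, because the genuinely analytic content — the convergence $H(\raisedchi^\Delta) + \log\Delta \to h(\raisedchi)$ — has already been isolated in Theorem~\ref{shannon_differential_entropy_rate_link}.
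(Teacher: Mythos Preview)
Your proposal is correct and follows essentially the same route as the paper: both arguments use that $\hat{h}_{NPD}$ differs from $\hat{H}$ by the deterministic constant $\log\Delta$, so the two a.s.\ limits coincide up to that shift, and then invoke Theorem~\ref{shannon_differential_entropy_rate_link} to identify $H(\raisedchi^\Delta)+\log\Delta$ with $h(\raisedchi)$ as $\Delta\to 0$. Your treatment is in fact more careful than the paper's very terse proof, since you explicitly address the order of the nested limits and the need to restrict $\Delta$ to a countable sequence so that the union of exceptional null sets remains null---points the paper leaves implicit.
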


\begin{proof}
  By definition
  \begin{eqnarray*}
      \lim\limits_{n \rightarrow \infty} \hat{h}_{NPD}(\raisedchi_n)
      & = &  \lim\limits_{n \rightarrow \infty} \hat{H}(\raisedchi^\Delta_n) + \log\Delta \\
      & = &  \log\Delta + H(\raisedchi^\Delta).
  \end{eqnarray*}
  Hence, by Theorem~\ref{shannon_differential_entropy_rate_link}, the
  limiting properties of the two are linked. 
\end{proof}

%\begin{proof}
%	We first consider the quantised process, $\raisedchi^\Delta =
%       \{X_m^\Delta\}_{m \in \mathbb{N}}$, of the stochastic process,
%       $\raisedchi$, with a quantisation of length,
%        $\Delta$. Consistency of $\hat{H}$ means 
%	\begin{align*}
%		\lim\limits_{n \rightarrow \infty} \hat{H}(\raisedchi^\Delta_n)  \rightarrow H(\raisedchi^\Delta).
%	\end{align*}
%	Adding $\log\Delta$ to both sides and defining the function
%        $g(x) = x + \log\Delta$ we can apply the continuous mapping
%        theorem~\cite[pg. 101]{Durrett2010} to get
%	\begin{align*}
%	  \lim\limits_{\Delta \rightarrow 0}
%          \left\{\lim\limits_{n \rightarrow \infty} \hat{H}(\raisedchi^\Delta_n)  + \log\Delta\right\}
%          \rightarrow \lim\limits_{\Delta \rightarrow 0}\{H(\raisedchi^\Delta) + \log\Delta\}.
%	\end{align*}
%	Since, the limit of $\Delta \rightarrow 0$ is independent of the random variables of the estimators this implies that
%	\begin{align*}
%	\lim\limits_{n \rightarrow \infty} \hat{H_n}(\raisedchi^\Delta)  + \lim\limits_{\Delta \rightarrow 0}\{\log\Delta\} \rightarrow H(\raisedchi^\Delta) + \lim\limits_{\Delta \rightarrow 0}\{\log\Delta\}.
%	\end{align*}
%	Hence, as $\Delta \rightarrow 0$, this implies that 
%	\begin{align*}
%	H(\raisedchi^\Delta) + \log(\Delta) \rightarrow h(\raisedchi),
%	\end{align*}
%	by Theorem~\ref{shannon_differential_entropy_rate_link} and therefore we get
%	\begin{align*}
%	\lim\limits_{n \rightarrow \infty} \hat{H_n}(\raisedchi^\Delta) + \log\Delta \rightarrow h(\raisedchi) \, a.s.
%	\end{align*}
%\end{proof}

\begin{remark}
	This is a very general result, that only relies on the existence of a consistent estimator for Shannon entropy rate. The argument is agnostic to the mode of convergence used, and the strength of the convergence depends on which mode of convergence is used for the consistency of the Shannon entropy rate estimator. 
\end{remark}

 We explore more general results for the variance, mean squared error, and then apply these to efficiency, as efficiency is applied to unbiased estimators of the entropy rate. 

\begin{theorem}\label{bias_var_mse}
	The NPD-Entropy, $\hat{h}_{NPD}$, constructed from an estimator
        of the Shannon entropy rate, $\hat{H}$ has the following properties
	\begin{enumerate}
		\item $\mbox{Bias}_h[\hat{h}_{NPD}(\raisedchi_n)] = \mbox{Bias}_H[\hat{H}(\raisedchi^\Delta_n)]$,
		\item $\mbox{Var}(\hat{h}_{NPD}(\raisedchi_n)) = \mbox{Var}(\hat{H}(\raisedchi^\Delta_n))$,
		\item $\mbox{MSE}(\hat{h}_{NPD}(\raisedchi_n)) = \mbox{MSE}(\hat{H}(\raisedchi^\Delta_n))$,
	\end{enumerate}
        as $\Delta \rightarrow 0$.
\end{theorem}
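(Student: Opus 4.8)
The plan is to exploit the fact that $\hat{h}_{NPD}$ and the underlying Shannon-rate estimator differ only by the deterministic offset $\log\Delta$ (for fixed $\Delta$), so that all the randomness is identical and only the reference ``true value'' changes. I would start from the definition $\hat{h}_{NPD}(\raisedchi_n) = \hat{H}(\raisedchi^\Delta_n) + \log\Delta$ and treat $\log\Delta$ as a constant throughout.

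For the variance, adding a deterministic constant to a random variable does not change its variance, so $\mbox{Var}(\hat{h}_{NPD}(\raisedchi_n)) = \mbox{Var}(\hat{H}(\raisedchi^\Delta_n))$ holds exactly, for every $\Delta$ (assuming only a finite second moment for $\hat{H}$), and in particular in the limit $\Delta \rightarrow 0$.

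For the bias I would write out
\begin{align*}
\mbox{Bias}_h[\hat{h}_{NPD}(\raisedchi_n)] &= E[\hat{H}(\raisedchi^\Delta_n)] + \log\Delta - h(\raisedchi),\\
\mbox{Bias}_H[\hat{H}(\raisedchi^\Delta_n)] &= E[\hat{H}(\raisedchi^\Delta_n)] - H(\raisedchi^\Delta),
\end{align*}
and subtract: the difference equals $H(\raisedchi^\Delta) + \log\Delta - h(\raisedchi)$, which by Theorem~\ref{shannon_differential_entropy_rate_link} tends to $0$ as $\Delta \rightarrow 0$. This is the one place where the qualifier ``as $\Delta \rightarrow 0$'' is genuinely needed, since at a fixed nonzero $\Delta$ the bias of $\hat{h}_{NPD}$ measured against $h(\raisedchi)$ still carries the quantisation error $H(\raisedchi^\Delta) + \log\Delta - h(\raisedchi)$. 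For the MSE I would then combine the two preceding facts via the decomposition $\mbox{MSE} = \mbox{Bias}^2 + \mbox{Var}$: the variances match exactly and the squared biases differ by a term that vanishes, so $\mbox{MSE}(\hat{h}_{NPD}(\raisedchi_n)) \rightarrow \mbox{MSE}(\hat{H}(\raisedchi^\Delta_n))$.

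The main obstacle is conceptual rather than technical: one must keep careful track of the fact that the target of the bias switches from $H(\raisedchi^\Delta)$ for the discrete estimator to $h(\raisedchi)$ for NPD-Entropy, and it is precisely Theorem~\ref{shannon_differential_entropy_rate_link} — inheriting its Riemann-integrability hypothesis on the joint densities — that forces these two targets to agree asymptotically. Beyond invoking finiteness of the first and second moments of $\hat{H}$, no further analysis is required.
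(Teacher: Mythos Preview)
Your proposal is correct and uses essentially the same ingredients as the paper: the constant offset $\log\Delta$, Theorem~\ref{shannon_differential_entropy_rate_link}, and the bias--variance decomposition. The only difference is organizational --- the paper proves bias and MSE directly by substitution and then derives the variance from $\mbox{MSE}=\mbox{Bias}^2+\mbox{Var}$, whereas you prove the variance directly (and observe it holds exactly, not just in the limit) and then derive the MSE; your ordering is arguably cleaner since it isolates precisely where the $\Delta\to 0$ hypothesis is actually used.
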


\begin{proof}
	From the definition of bias~\cite[pg. 126]{rice2006mathematical}, we have
	\begin{align*}
	\mbox{Bias}_H[\hat{H}] &= E[\hat{H}] - H,\\
	&= E[(\hat{h}_{NPD} - \log\Delta)] - (h - \log\Delta), \text{ as } \Delta \rightarrow 0\\
	&= E[\hat{h}_{NPD}] - h,\\
	&= \mbox{Bias}_h[\hat{h}_{NPD}].
	\end{align*}
	%Now we consider the mean squared error of the process,
	%\begin{align*}
	%MSE[\hat{H}] &= E[(\hat{H} - H)^2],\\
	%&= E[((\hat{h} - \log\Delta) - (h - \log\Delta))^2],  as \Delta \rightarrow 0\\
	%&= E[\hat{h} - h],\\
	%&= MSE[\hat{h}].
	%\end{align*}
	The result for the mean squared error of the estimator follows by an identical argument, substituting, $\hat{h}_{NPD} - \log\Delta$ and $h - \log\Delta$ for $\hat{H}$ and $H$ respectively as $\Delta \rightarrow 0$, therefore we omit the argument.\\
	We consider another characterisation of the mean squared error, known as the bias-variance decomposition~\cite[pg. 24]{friedman2001elements}, 
	\begin{align*}
	\mbox{MSE}[\hat{H}] &= \mbox{Bias}_H[\hat{H}]^2 + \mbox{Var}(\hat{H}),
	\end{align*}
	and placing in terms of the variance and substituting the bias and mean square error equivalences as $\Delta \rightarrow 0$, we get
	\begin{align*}
	\mbox{Var}(\hat{H}) &= \mbox{Bias}_H[\hat{H}]^2 - \mbox{MSE}[\hat{H}],\\
	&= \mbox{Bias}_h[\hat{h}_{NPD}]^2 - \mbox{MSE}[\hat{h}_{NPD}], \text{ as }  \Delta \rightarrow 0,\\
	&= \mbox{Var}(\hat{h}_{NPD}).
	\end{align*}
\end{proof}

We investigate the asymptotic normality of the Shannon entropy rate estimator, That is, the asymptotic distribution of the estimator being a normally distributed random variable. 

\begin{theorem}
	NPD-Entropy, $h$, is asymptotically normal as $\Delta \rightarrow 0$ if the associated Shannon entropy rate estimator, $H$, is consistent and asymptotically normal.
\end{theorem}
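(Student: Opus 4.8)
The plan is to observe that, for any fixed quantisation width $\Delta>0$, the NPD-Entropy estimator differs from the Shannon entropy rate estimator only by the deterministic additive constant $\log\Delta$, so the two have the same limiting distributional shape; the work is then entirely in tracking the centre of that distribution as $\Delta\to0$. First I would fix $\Delta>0$ and invoke the assumed asymptotic normality of $\hat H$: there is a normalising sequence $a_n\to\infty$ (typically $a_n=\sqrt n$) and a variance $\sigma^2_\Delta\in(0,\infty)$ with
\begin{align*}
a_n\left(\hat H(\raisedchi^\Delta_n)-H(\raisedchi^\Delta)\right)\xrightarrow{d}\mathcal N(0,\sigma^2_\Delta),\qquad\text{as }n\to\infty.
\end{align*}
Because $\hat h_{NPD}(\raisedchi_n)=\hat H(\raisedchi^\Delta_n)+\log\Delta$ with $\log\Delta$ non-random, subtracting the constant $H(\raisedchi^\Delta)+\log\Delta$ gives
\begin{align*}
a_n\left(\hat h_{NPD}(\raisedchi_n)-\big(H(\raisedchi^\Delta)+\log\Delta\big)\right)&=a_n\left(\hat H(\raisedchi^\Delta_n)-H(\raisedchi^\Delta)\right)\\
&\xrightarrow{d}\mathcal N(0,\sigma^2_\Delta),
\end{align*}
so for every fixed $\Delta$ the NPD estimator is asymptotically normal about the centre $H(\raisedchi^\Delta)+\log\Delta$. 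Here consistency of $\hat H$ is what certifies that this centre really is the true Shannon entropy rate of $\raisedchi^\Delta$ rather than some shifted quantity.

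Second, I would bring in Theorem~\ref{shannon_differential_entropy_rate_link}, which states that $H(\raisedchi^\Delta)+\log\Delta\to h(\raisedchi)$ as $\Delta\to0$. Hence, as the quantisation is refined, the sequence of limiting Gaussian laws obtained above is centred at a location converging to $h(\raisedchi)$; reading the conclusion as an iterated limit (first $n\to\infty$, then $\Delta\to0$) this already gives that $\hat h_{NPD}$ is asymptotically normal with mean $h(\raisedchi)$. For a genuinely joint statement I would couple the two limits through Slutsky's theorem, writing the recentred estimator $a_n\big(\hat h_{NPD}(\raisedchi_n)-h(\raisedchi)\big)$ as the sum of the asymptotically normal term $a_n\big(\hat H(\raisedchi^\Delta_n)-H(\raisedchi^\Delta)\big)$ and the deterministic remainder $a_n\big(H(\raisedchi^\Delta)+\log\Delta-h(\raisedchi)\big)$.

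The main obstacle is precisely that deterministic remainder and, more broadly, the legitimacy of interchanging the $n\to\infty$ and $\Delta\to0$ limits. For the joint version one must let $\Delta=\Delta_n\to0$ slowly enough that $a_n\big(H(\raisedchi^{\Delta_n})+\log\Delta_n-h(\raisedchi)\big)\to0$ — that is, the Riemann-sum approximation error quantified in Theorem~\ref{shannon_differential_entropy_rate_link} must decay faster than $1/a_n$ — while also keeping the variances $\sigma^2_{\Delta_n}$ bounded away from $0$ and $\infty$ along that schedule; neither property is guaranteed by the stated hypotheses, which is why I would present and prove the iterated-limit version as the theorem and add a remark on the refinement. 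Under the iterated reading there is no real difficulty beyond this bookkeeping: the substance of the argument is simply that adding a deterministic constant does not alter the form of a limiting normal distribution, exactly as in the proofs of Theorems~\ref{consistency} and~\ref{bias_var_mse}.
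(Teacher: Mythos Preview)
Your proposal is correct and follows essentially the same approach as the paper: both arguments rest on the observation that $\hat h_{NPD}$ differs from $\hat H$ by the deterministic constant $\log\Delta$, so asymptotic normality transfers directly by substitution, with Theorem~\ref{shannon_differential_entropy_rate_link} supplying the centring at $h(\raisedchi)$ as $\Delta\to0$. Your treatment is in fact more careful than the paper's, which simply substitutes $\hat H=\hat h_{NPD}-\log\Delta$ and $H=h-\log\Delta$ without commenting on the order of the $n\to\infty$ and $\Delta\to0$ limits; your explicit discussion of the iterated versus joint limit and the Slutsky-type remainder is a welcome refinement of what the paper leaves implicit.
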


\begin{proof}
	Given a $\sigma^2 \in \mathbb{R^+}$, for $n \rightarrow \infty$ with an asymptotically normal estimator we have,
	\begin{align*}
	\sqrt{n} \left(\hat{H}(\raisedchi^\Delta_n) - H \right) &\rightarrow \mathcal{N}(0, \sigma^2).
	\end{align*}
	Substituting the expression for the differential entropy rate,we get 
	\begin{align*}
	\sqrt{n} \left( (\hat{h}_{NPD}(\raisedchi_n) - \log \Delta) - (h - \log \Delta) \right) &\rightarrow\\ \mathcal{N}(0, \sigma^2), &\text{ as }\Delta \rightarrow 0,
	\end{align*}
	and we conclude
	\begin{align*}
	\sqrt{n} \left(\hat{h}_{NPD}(\raisedchi_n)- h \right) &\rightarrow \mathcal{N}(0, \sigma^2).
	\end{align*}
\end{proof}

The next property we examine is the efficiency, which is,
\begin{align*}
	e(T) = \frac{I(\theta)^{-1}}{Var(T)},
\end{align*}
for an unbiased estimator, $T$, of a parameter, $\theta$, where $ I(\theta) = E\left[ \left(\frac{\partial}{\partial \theta} \log f(T;\theta)\right)^2 \mid \theta\right]$
 is the Fisher information~\cite[pg. 481]{cramer1999mathematical}. Efficiency of an estimator measures the variance of the estimator with respect to the lowest possible variance, the Cramer-Rao bound~\cite[pg. 480]{cramer1999mathematical}, which is when $e(T) = 1$.

\begin{theorem}
	The efficiency of NPD-Entropy, $e(\hat{h}_{NPD}(\raisedchi_n))$ is equal to the efficiency of the Shannon entropy rate estimator, $e(\hat{H}(\raisedchi^\Delta_n))$ for a Riemann integrable probability density function for the process, $\raisedchi = \{X_i\}_{i \in \mathbb{N}}$.
\end{theorem}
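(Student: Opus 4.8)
The plan is to reduce the statement to Theorem~\ref{bias_var_mse} together with the elementary behaviour of the Fisher information under a translation of the parameter. Since efficiency is only defined for unbiased estimators, I would first invoke part~(1) of Theorem~\ref{bias_var_mse}: $\hat h_{NPD}(\raisedchi_n)$ is unbiased for $h(\raisedchi)$ (as $\Delta\to 0$) if and only if $\hat H(\raisedchi^\Delta_n)$ is unbiased for $H(\raisedchi^\Delta)$, so throughout we may assume both are unbiased and both efficiencies are well defined.

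Next I would record the key structural fact. For a fixed window $\Delta$, the NPD estimator is the Shannon estimator shifted by the constant $\log\Delta$, and by Theorem~\ref{shannon_differential_entropy_rate_link} the two target quantities are related by the \emph{same} constant shift, $h(\raisedchi) = H(\raisedchi^\Delta) + \log\Delta$, in the limit $\Delta\to 0$ (this is where the Riemann integrability hypothesis enters). Hence the two estimation problems differ only through the reparametrisation $\theta_H \mapsto \theta_h = \theta_H + \log\Delta$, a translation with unit Jacobian. Moreover, since $\hat h_{NPD}$ is by construction computed from the quantised sample $\raisedchi^\Delta_n$ — it is literally $\hat H(\raisedchi^\Delta_n) + \log\Delta$ — the relevant statistical model, and therefore the relevant likelihood, is in both cases that of $\raisedchi^\Delta_n$.

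From here the computation is routine. Under a smooth reparametrisation $\phi = g(\theta)$ the Fisher information satisfies $I_\phi(\phi) = I_\theta(\theta)\,(g'(\theta))^{-2}$; with $g$ the translation above, $g' \equiv 1$, so $I(h) = I(H(\raisedchi^\Delta))$. Combining this with the variance identity $\mathrm{Var}(\hat h_{NPD}(\raisedchi_n)) = \mathrm{Var}(\hat H(\raisedchi^\Delta_n))$ from part~(2) of Theorem~\ref{bias_var_mse}, I would substitute into the definition of efficiency,
\begin{align*}
e(\hat h_{NPD}(\raisedchi_n)) = \frac{I(h)^{-1}}{\mathrm{Var}(\hat h_{NPD}(\raisedchi_n))} = \frac{I(H(\raisedchi^\Delta))^{-1}}{\mathrm{Var}(\hat H(\raisedchi^\Delta_n))} = e(\hat H(\raisedchi^\Delta_n)),
\end{align*}
all identities holding as $\Delta\to 0$, which is the claim.

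The main obstacle is the middle step: making precise the sense in which $h(\raisedchi)$ and $H(\raisedchi^\Delta)$ are ``the same parameter up to an additive constant'', so that the Fisher informations genuinely coincide. One has to be slightly careful because the continuous sample $\raisedchi_n$ could in principle carry strictly more Fisher information about $h$ than the quantised sample $\raisedchi^\Delta_n$; the resolution is exactly that the NPD estimator never sees the continuous data, only $\raisedchi^\Delta_n$, so the Cram\'er--Rao comparison must be made against the information in $\raisedchi^\Delta_n$, which is where the translation-invariance argument applies cleanly. Everything else is bookkeeping already carried out in Theorems~\ref{shannon_differential_entropy_rate_link} and~\ref{bias_var_mse}.
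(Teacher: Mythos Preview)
Your proposal is correct and reaches the same conclusion as the paper, but by a genuinely different route. You argue abstractly: since $\hat h_{NPD}$ is a deterministic translate of $\hat H$ and the target parameters are related by the same translation (Theorem~\ref{shannon_differential_entropy_rate_link}), the reparametrisation rule $I_\phi = I_\theta / (g')^2$ with $g' \equiv 1$ gives $I(h) = I(H)$ immediately, and Theorem~\ref{bias_var_mse} handles the variance. The paper instead computes $I_{\raisedchi^\Delta}(H)$ explicitly: it writes the Fisher information of the quantised process as the Riemann sum $\sum_i \big(\tfrac{d}{dH}\log p_i\big)^2 p_i$, substitutes $p_i = f(x_i)\Delta$, uses that $\tfrac{d}{dH}\log\Delta = 0$ to drop the $\Delta$ inside the score, and then lets $\Delta\to 0$ so that the sum becomes the integral $\int \big(\tfrac{d}{dh}\log f(x)\big)^2 f(x)\,dx = I_{\raisedchi}(h)$.

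What each buys: your translation-invariance argument is cleaner and isolates the essential mechanism (unit Jacobian of the reparametrisation), and your remark that the NPD estimator only ever sees $\raisedchi^\Delta_n$, so the relevant Cram\'er--Rao comparison is against the information in the \emph{quantised} sample, is a point the paper leaves implicit. The paper's hands-on computation, on the other hand, makes the Riemann integrability hypothesis do visible work (it is exactly what justifies $\sum_i f(x_i)\Delta \to \int f(x)\,dx$), and along the way shows the slightly stronger statement that the discrete Fisher information actually converges to the continuous one as $\Delta\to 0$, not merely that the two parametrisations of the discrete model carry equal information.
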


\begin{proof}
	We consider the Fisher information of the quantised process,
	\begin{align*}
	I_{\raisedchi^\Delta}(H) &= \sum_{i} \left( \frac{d}{d H} \log p_i \right)^2 p_i,\\
	&= \sum_{i} \left( \frac{d}{d H} \log (f(x_i) \Delta) \right)^2 f(x_i) \Delta.
	\end{align*}
	Where $p_i = \int_{i^j\Delta}^{(i^j + 1)\Delta} f(x) dx = f(x_{i^j})\Delta$, since the density is Riemann integrable. Therefore the Fisher information is
	\begin{align*}
	I_{\raisedchi^\Delta}(H) &= \sum_{i} \left( \frac{d}{d H} \log f(x_i) + \log \Delta \right)^2 f(x_i) \Delta,\\
	&= \sum_{i} \left( \frac{d}{d H} \log f(x_i) \right)^2 f(x_i) \Delta,
	\end{align*}
	since $\log \Delta$ is constant with respect to $H$. Then 
	\begin{align*}
	I_{\raisedchi^\Delta}(H) &= \int \left( \frac{d}{d h} \log f(x) \right)^2 f(x) dx, \text{ as } \Delta \rightarrow 0,\\
	&= I_{\raisedchi}(h).
	\end{align*}
	Where $\frac{dH}{df} = \frac{d(h - \log\Delta)}{df} = \frac{dh}{df}$, for a given function $f$, as $\Delta \rightarrow 0$.\\
	Now we, consider the efficiency of the differential entropy rate estimator,
	\begin{align*}
	e(\hat{h}_{NPD}) &= \frac{\frac{1}{I(h)}}{\mbox{Var}(\hat{h}_{NPD})},\\
	&= \frac{\frac{1}{I(H)}}{Var(\hat{H})}, \text{ from Theorem~\ref{bias_var_mse}},\\
	&= e(\hat{H})
	\end{align*}
\end{proof}

Another property of interest is the convergence rate of an estimator, as we often want to understand how long it would take to be within a certain distance from the real value, to understand how much data is required for estimation. 

\begin{theorem}\label{theorem_convergence_rate}
	The convergence rate of NPD-Entropy, $\hat{h}_{NPD}(\raisedchi_n)$  is equal to the convergence rate of the Shannon entropy rate estimator, $\hat{H}(\raisedchi^\Delta_n)$ as $\Delta \rightarrow 0$.
\end{theorem}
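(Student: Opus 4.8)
The plan is to exploit the same structural fact used in the proofs of Theorem~\ref{consistency} and Theorem~\ref{bias_var_mse}: NPD-Entropy differs from its underlying Shannon entropy rate estimator only by the additive term $\log\Delta$, which is constant in the data length $n$. A convergence rate describes how the estimation error decays as $n\rightarrow\infty$ with $\Delta$ fixed, and an additive constant in $n$ cannot affect that decay, so the two rates must coincide.

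First I would fix a precise notion of convergence rate: $\hat{H}$ converges at rate $r_n$ (with $r_n\rightarrow 0$) if $(\hat{H}(\raisedchi^\Delta_n) - H(\raisedchi^\Delta))/r_n$ is bounded --- in probability or almost surely, according to the mode of convergence inherited from the chosen discrete estimator. Then, for fixed $\Delta$, I would decompose the NPD error as
\begin{align*}
\hat{h}_{NPD}(\raisedchi_n) - h(\raisedchi) &= \big(\hat{H}(\raisedchi^\Delta_n) - H(\raisedchi^\Delta)\big) + \big(H(\raisedchi^\Delta) + \log\Delta - h(\raisedchi)\big),
\end{align*}
using $\hat{h}_{NPD}(\raisedchi_n) = \hat{H}(\raisedchi^\Delta_n) + \log\Delta$. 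By Theorem~\ref{shannon_differential_entropy_rate_link} the second bracket is a deterministic quantity that tends to $0$ as $\Delta\rightarrow 0$, so in that limit
\begin{align*}
\hat{h}_{NPD}(\raisedchi_n) - h(\raisedchi) &\rightarrow \hat{H}(\raisedchi^\Delta_n) - H(\raisedchi^\Delta).
\end{align*}
Dividing through by $r_n$, the sequence $(\hat{h}_{NPD}(\raisedchi_n) - h(\raisedchi))/r_n$ is bounded if and only if $(\hat{H}(\raisedchi^\Delta_n) - H(\raisedchi^\Delta))/r_n$ is, and since this holds for every candidate rate $r_n$, the NPD-Entropy estimator attains exactly the convergence rate of $\hat{H}$.

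The only point requiring care --- and the one I would state explicitly rather than gloss over --- is the interchange of the two limits that recurs throughout this section: the rate statement is an $n\rightarrow\infty$ assertion, whereas the identity $h(\raisedchi) = H(\raisedchi^\Delta) + \log\Delta$ is exact only as $\Delta\rightarrow 0$. As in Theorem~\ref{consistency}, I would phrase the conclusion as holding ``as $\Delta\rightarrow 0$'', so that the residual quantisation gap $H(\raisedchi^\Delta) + \log\Delta - h(\raisedchi)$ is negligible relative to $r_n$. Beyond invoking Theorem~\ref{shannon_differential_entropy_rate_link} there is no analytic content here; the argument is bookkeeping of the same kind that produced the bias, variance, MSE, asymptotic-normality and efficiency results already established, and I anticipate no genuine obstacle.
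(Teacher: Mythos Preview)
Your proposal is correct and takes essentially the same approach as the paper: both arguments reduce to the observation that $\hat{h}_{NPD}(\raisedchi_n)-h(\raisedchi)=\hat{H}(\raisedchi^\Delta_n)-H(\raisedchi^\Delta)$ as $\Delta\rightarrow 0$, since the $\log\Delta$ terms cancel. The only cosmetic difference is that the paper formalises ``convergence rate'' via a tail bound $Pr(|\hat{H}(\raisedchi^\Delta_n)-H|>\epsilon)\le C(n)$ rather than your stochastic-order formulation $(\hat{H}-H)/r_n$ bounded, but the substitution step is identical in either framing.
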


\begin{proof}
	For the convergence rate, $C(n)$, of a Shannon entropy rate estimator, we have $\forall n \in \mathbb{N}$,
	\begin{align*}
	Pr(|\hat{H}(\raisedchi^\Delta_n) - H| > \epsilon) &\le C(n).
	\end{align*}
	Then making the substitution of the differential entropy rate estimator, and cancelling out the $\log \Delta$ terms, we get
	\begin{align*}
	%Pr(|(\hat{h}_{NPD}(\raisedchi_n) - \log\Delta) - (h - \log\Delta)| > \epsilon) &\le C(n), \Delta \rightarrow 0,\\
	Pr(|\hat{h}_{NPD}(\raisedchi_n) - h| > \epsilon) &\le C(n).
	\end{align*}
\end{proof}

\subsection{Implementation and Initial Results}

\begin{figure*}[t]
	\centering
        \begin{subfigure}[t]{0.98\columnwidth}
        \centering
	  \includegraphics[width=0.8\linewidth]{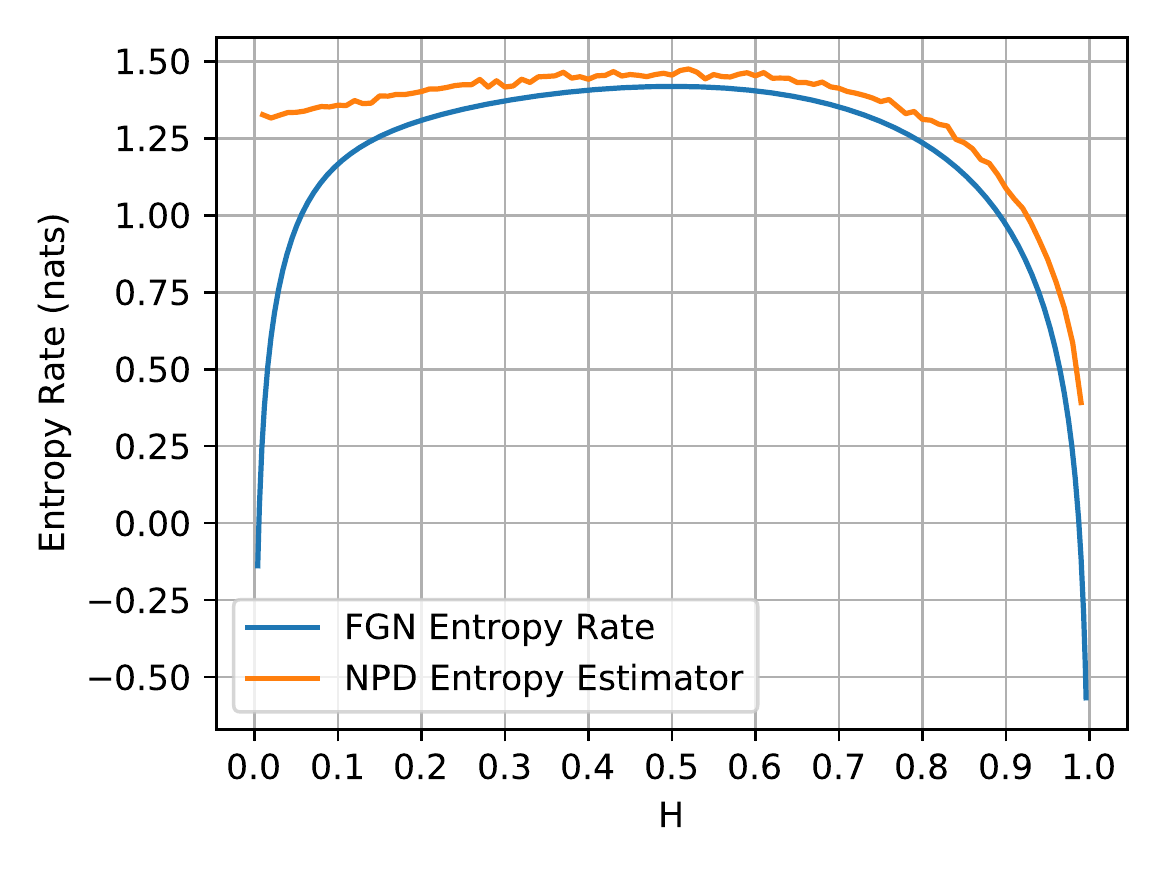}
	  \caption{Comparison for Fractional Gaussian Noise. The estimates
            have a small bias, but get poorer as $H \rightarrow 0$ and
            the entropy rate function asymptotically tends to
            $-\infty$. }
	  \label{fig: fgn_entropy_rate_comparison}
        \end{subfigure}
        \hfill
        \begin{subfigure}[t]{0.98\columnwidth}
        \centering
 	\includegraphics[width=0.8\linewidth]{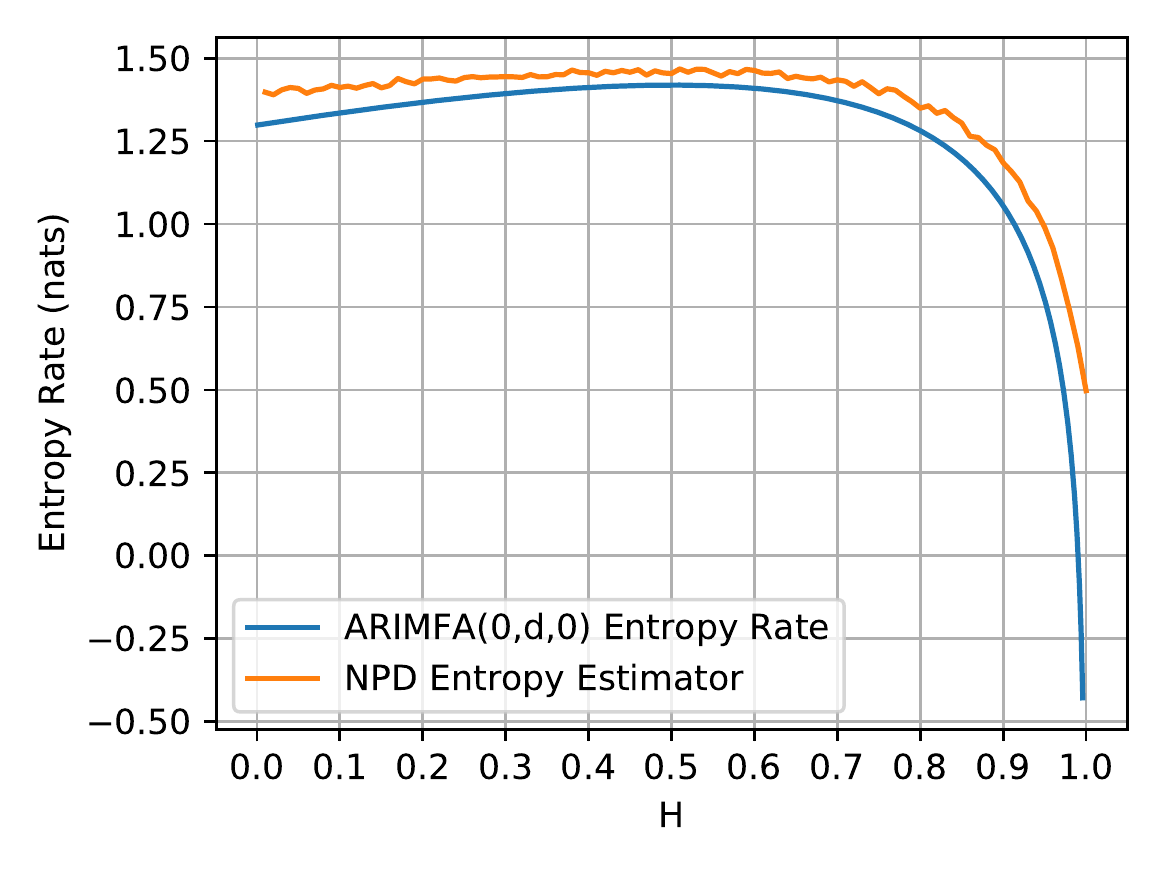}
	\caption{Comparison for ARFIMA(0,d,0). Note the close
          correspondence, with a small bias.}
	\label{fig: arfima_entropy_rate_comparison}
        \end{subfigure}
        \caption{NPD Entropy with $\Delta = 1$ compared to actual entropy rate of FGN and ARFIMA(0,d,0). With $\Delta = 1$, the Shannon entropy rate estimator is directly used as the NPD-Entropy estimate.}
\end{figure*}

To test this concept, we have implemented an estimator using the discrete-valued estimator described in Section~\ref{discrete_entropy}. The conditions in Kontoyiannis and
Suhov~\cite{kontoyiannis1998nonparametric} were used to ensure that we produce a
consistent Shannon entropy rate estimator.

The estimator was implemented in Python using the NumPy
library~\cite{harris2020array}.  We parallelised the
implementation of NPD Entropy to increase the performance of the
estimation technique since we can calculate the length of each
prefix sequence, $L^n(x)$, independently. This has been implemented using the Python
library Numba~\cite{numba2015}, which is able to parallelise the loop
to calculating the prefixes, $L(x)$. The algorithm can be run online, for quick estimation of entropy rates of streaming data. 

The estimators were tested on data generated by FGN and ARFIMA(0,d,0)
processes. We generated 50 test samples of 2000 data points for each
process. The estimates for each process were then averaged, to show
the mean estimate at each value of $H$.

From observing both Figure~\ref{fig: fgn_entropy_rate_comparison} and
Figure~\ref{fig: arfima_entropy_rate_comparison}, utilising a
quantised interval size of 1 (i.e. $\Delta = 1$), we see overall good agreement,
except in the region $H \rightarrow 0$ for FGN. In all cases the
NPD-Entropy estimator was better at picking up the underlying trends
of the entropy rate functions than
the relative measures tested in Section~\ref{testing_measures}, but
specific entropy produced closer estimates of the differential entropy
rate at a large cost in computation time. 

The limit theorems and the results on the inherited properties, from Section~\ref{quantised_estimation}, of the estimator hold as $\Delta \rightarrow 0$, which for practical purposes we are unable to achieve. However, this leaves us with trying to understand the possible errors that have been introduced. Potential sources of estimation error are from the difference of the Riemann sums and the value of the integral, \emph{i.e.}, as $\sum f(x) \rightarrow \int f(x)$. As the quantisation window decreases, as $\Delta \rightarrow 0$, small differences from the true value and estimated values can be enlarged by the term $\log\Delta$. To test the deviation from true value, we produced estimates for a range of different quantisation window sizes, $\frac{1}{3}, \frac{1}{2}, 1, 2, 3$, for the Hurst parameter range, $[0,1]$. Given that we have standardised all of the process variances to be $\sigma^2 = 1$, the window size is in direct proportion with the variance. This means that small fluctuations of the process will stay within same bin, and therefore the measured uncertainty will come from larger shifts. In general we would tailor the window size to the variance, since we want to balance the number of bins and the expected size of movements between random variables of the process.

The results for FGN are shown in Figure~\ref{fig: fgn_entropy_rate_comparison_diff_deltas}. The most accurate of these across the whole range is at $\Delta = 1$, which is surprising as we should be more accurate as the quantisation size decreases. However, we are balancing a few different sources of error and in this case the best result is to quantise at a window size of 1. However, as the Hurst parameter tends to 1, the finer quantisations become more accurate estimators. As the correlations become stronger the variation in the process will be more subtle, hence the smaller quantisation windows are required to estimate the uncertainty.
      
The results are similar for ARFIMA(0,d,0) processes, with the closest estimator, across the entire parameter range, being when the quantisation window was of size, $\Delta = 1$. The estimators got worse as $\Delta$ got further from 1, i.e. the next closest were $\Delta = \frac{1}{2}$ and $2$. The finer quantisations, $\Delta = \frac{1}{2}, \frac{1}{3}$ became more accurate as $H$ increased.

\begin{figure*}[t]
  \centering
	\begin{subfigure}[t]{0.98\columnwidth}
		\centering
		\includegraphics[width=0.85\linewidth]{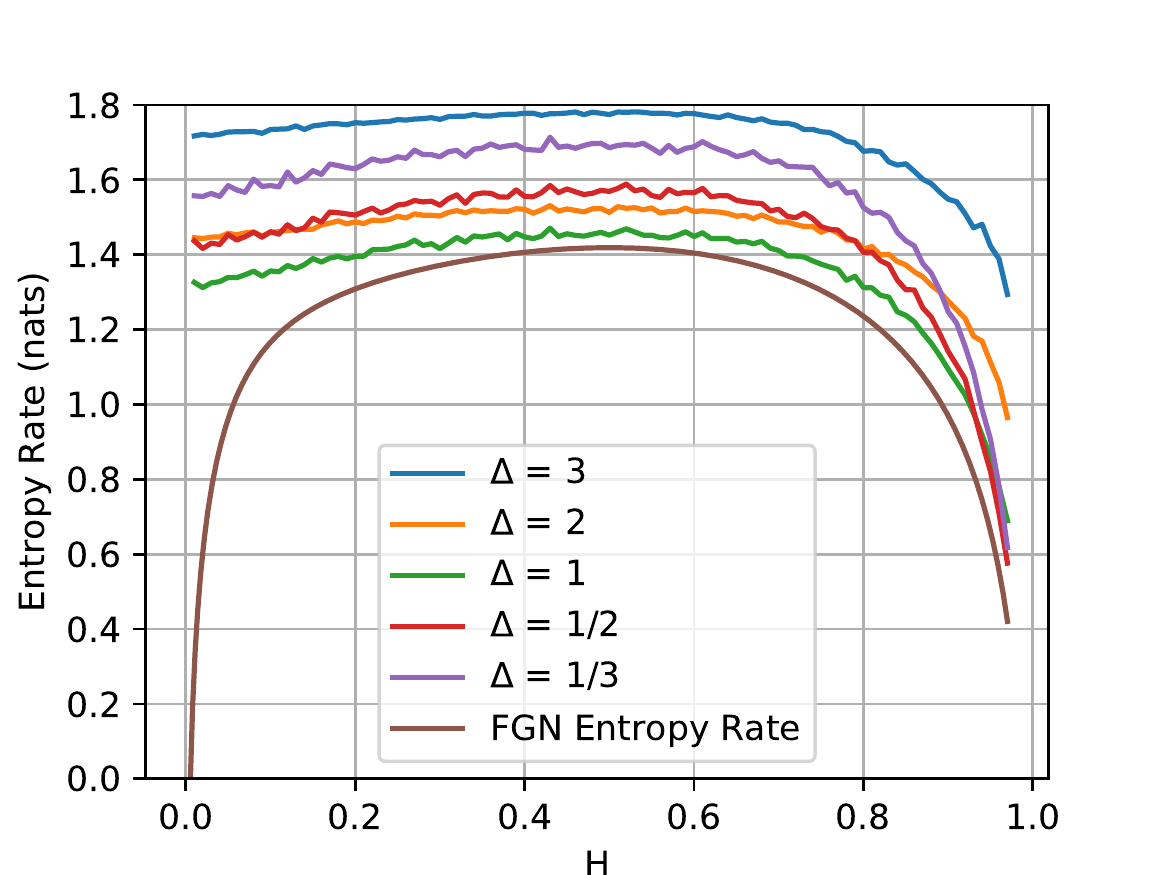}
		\caption{Comparison of the quantised estimates and true values of differential entropy rate of FGN. The window size of one has the best estimates, however as the Hurst parameter tends to one the finer quantisations, \emph{i.e.}, $\frac{1}{3}$ and $\frac{1}{2}$, approximate the asymptotic decrease better.}
		\label{fig: fgn_entropy_rate_comparison_diff_deltas}
	\end{subfigure}
	\hfill
	\begin{subfigure}[t]{0.98\columnwidth}
		\centering
		\includegraphics[width=0.85\linewidth]{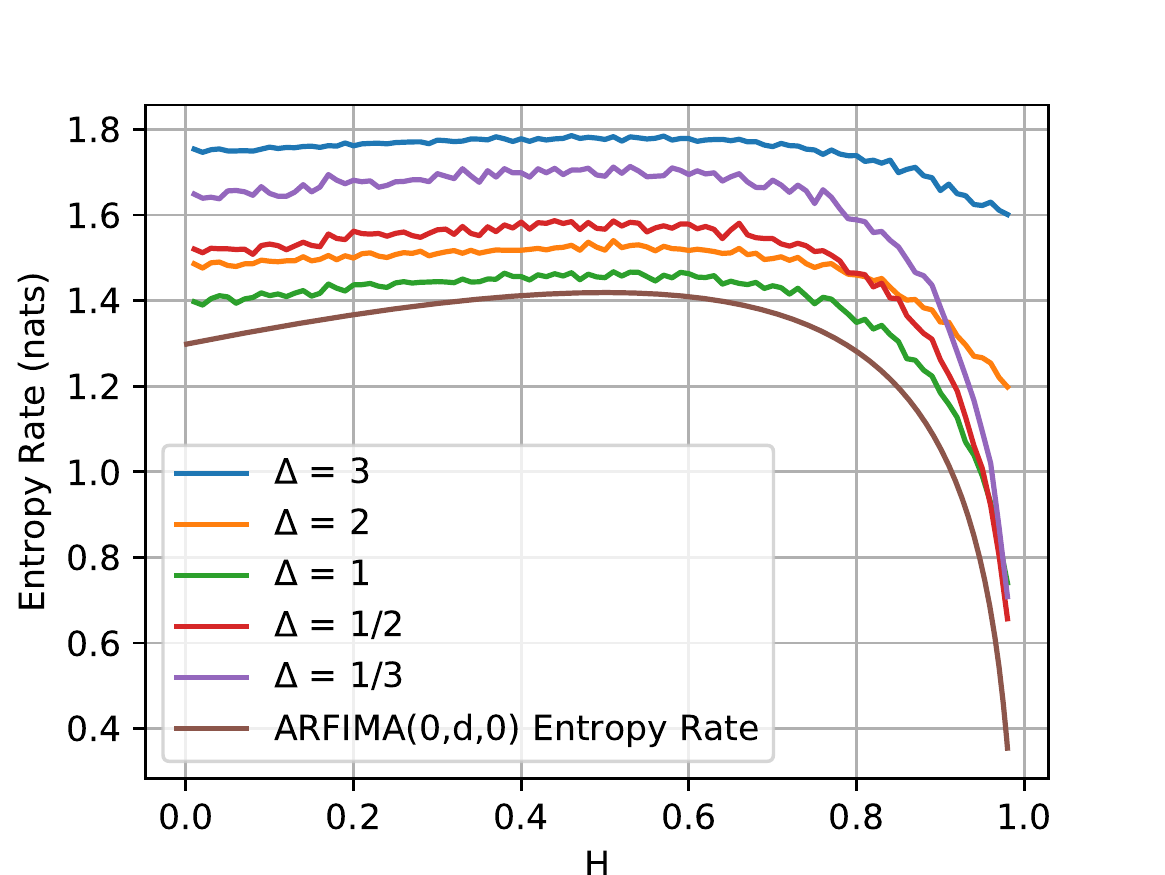}
		\caption{Comparison of the quantised estimates and true values of differential entropy rate of ARFIMA(0,d,0). Similar to FGN, the window size of $1$ provides the best estimates over the range of Hurst parameter values, however we also see that the finer quantisations become more accurate as $H$ increases towards $1$.}
		\label{fig: arfima_entropy_rate_comparison_diff_deltas}
	\end{subfigure}
	\caption{NPD Entropy with $\Delta = \frac{1}{3}, \frac{1}{2}, 1, 2, 3$ compared to actual entropy rate of FGN and ARFIMA(0,d,0)}
\end{figure*}

%TODO - Generate some empricial analysis of these functions
%We go in depth for one Hurst parameter, $H=0.75$ for both the FGN and ARFIMA(0,d,0) processes, to see if we can empirically determine an appropriate quantisation size that works practically for estimation purposes. 

\subsection{Robustness of estimator to non-stationarity}

A common assumption that is made to enable analysis of stochastic processes and for data generating processes, is stationarity, meaning that the probability distribution is time invariant over the stochastic processes.

\begin{definition}
	A stochastic process, $\raisedchi = \{X_i\}_{i \in \mathbb{N}}$, with a cumulative distribution function for its joint distribution at times, $t_1 + \tau, \ldots, t_n + \tau$, of, $F_X(t_1 + \tau, \ldots, t_n + \tau)$. We say that the process is stationary if $F_X(t_1 + \tau, \ldots, t_n + \tau) = F_X(t_1, \ldots, t_n), \forall \tau > 0, t_1, \ldots, t_n \in \mathbb{R}, \forall n \in \mathbb{N}$.
\end{definition}

Many real world processes are not really stationary, hence we want to analyse how robust the quantised estimation technique is to non-stationarity. In the previous section we have shown that the estimator has good performance in the cases of strongly correlated processes. However, we would like to test the performance against processes that have a convergent entropy rate, but have moment statistics that vary with time. 

%A less well defined property that we want from our estimator is to be robust, meaning that for a variety of different distributions, and in particular ones that are far from "normality" or a case we are interested is departures from stationarity assumptions for a stochastic process. in the previous section we have shown that the estimator has good performance in the cases of processes with power-law decaying covariance function, and in particular long range dependent processes. However, we would like to test the performance against processes that have a convergent entropy rate, but have moment statistics that vary with time.

The initial test for a departure from stationarity is a
mean-shifting process. We define this as $X_n = \mu_n + \epsilon_n$,
where $\epsilon_n \sim \mathcal{N}(0,1)$ and independent and where $\mu_n$ alternates periodically between 0 and 1
every 100 samples.The mean shifts are predictable so the entropy rate
is
\begin{align*}
	 h\left(\mu_n + \epsilon_n\right) 
	 &= h\left(\epsilon_n\right)  = \frac{1}{2} \log(2\pi  e) \approx 1.419.
\end{align*}

We test this process with 50 realisations of 2000 data points. The mean and variance of
the estimates are shown in Table~\ref{tab:non_stationary}. NPD-Entropy, with a choice of parameter $\Delta = 1$, made an estimate of the entropy rate of 1.554,
which overestimates the true value by a small amount, similar to
specific entropy. Sample and permutation entropy both provided large
overestimates of the true value, with permutation entropy having an
extremely low variance, close to its maximum value, $\log(3!) \approx
2.585$. Approximate entropy was a large underestimate, similar to the
behaviour with $m=3$ for both the ARFIMA(0,d,0) and FGN
estimates.

The second test for robustness to a departure from non-stationarity is
to consider a simple model that has a stationary mean, but varying
second order statistics, a Gaussian Random Walk, $\{Z_n\}_{n \in
  \mathbb{N}}$. Each step, $X_i \sim \mathcal{N}(0, 1)$ is independent and we
consider the sum of the steps $Z_n = \sum_{i = 1}^{n} X_i$. The
entropy rate can be derived by considering the limiting conditional
entropy to give $h(Z_n) = h(X_n) \approx 1.419$. However,
the process has non-stationary second moment $\mbox{Var}(Z_n) = n$.

%% The
%% mean of the estimates for the entropy rate from the quantised
%% estimator was $2.385$. Which is a larger overestimate than the
%% previous model. This may be due to the larger variance in the Gaussian
%% walk model, which is $Var(Z_n) = Var(\sum_{i = 0}^{n} X_i) = \sum_{i =
%%   0}^{n} Var(X_i) = n\sigma^2$. The sample, approximate and
%% permutation entropies had similar characteristics as the previous
%% model, however approximate entropy diverged further from the true
%% value. Hence, these models do not seem to be very robust to this
%% departure from non-stationarity. The specific entropy was an
%% overestimate of the true entropy, with an estimate of 1.523

These initial results suggest that specific entropy is relatively
robust, but its alternatives are not. However NPD-Entropy is somewhat
robust to mean shifts, if not to changes in variance. The variance in this case is constantly growing in proportional to the number of random samples of the process, as opposed to the previous process which had non-stationarity in its mean, but a constant variance of the normally distributed noise. This may explain the difficulty of capturing the true value in this case, as all of the bins are likely to have small numbers of observations, unless large amounts of data points are observed. Therefore, we have to be careful in how processes depart from stationarity, to understand how closely we can estimate the true entropy rate.

\begin{table}
	\begin{center}
		\begin{tabular}{l|d{1.3}d{1.6}|d{1.3}d{3.4}}
			 & \multicolumn{2}{c|}{\textbf{Mean Shift}} & \multicolumn{2}{c}{\textbf{Gaussian Walk}} \\
			 \textbf{Estimation Technique} & \multicolumn{1}{r}{ \textbf{Mean}} & \multicolumn{1}{r|}{ \textbf{Variance}} & \multicolumn{1}{r}{ \textbf{Mean}} & \multicolumn{1}{c}{\textbf{Variance}} \\
			\hline
			Approximate Entropy & 0.482 & 0.0008 & 0.099 & 0.0007 \\
			Sample Entropy & 2.263 & 0.015 & 2.336 & 0.205\\
			Permutation Entropy & 2.582 & 0.000005 & 2.497 & 0.0002 \\
			Specific Entropy & 1.478 & 0.0005 & 1.523 & 0.0007 \\
			NPD-Entropy & 1.554 & 0.002 & 2.385 & 0.083\\
		\end{tabular}
		\caption{Differential entropy rate estimates applied
                  to two non-stationary processes: a Gaussian
                  i.i.d. mean shift and a Gaussian walk with entropy
                  rates 1.419. Each process was simulated 50
                  times. The relative measures, the sample entropy and
                  permutation entropy both provide large overestimates
                  of the true entropy, and approximate entropy
                  provides large underestimates. NPD-entropy behaves
                  only slightly worse than specific entropy on the
                  mean-shift process, but somewhat worse on the
                  Gaussian walk. Estimator variances are included to
                  inform about the relative size of the errors. }
		\label{tab:non_stationary}
	\end{center}
\end{table}

\subsection{Complexity Analysis of Estimation Techniques}

%% Each of the estimation techniques are designed for different use cases and hence have different time complexities and running times. In this section we will discuss the time complexity, in the context of the intended task.

% \begin{table}
%	\begin{center}
%		\begin{tabular}{l|c}
%			\textbf{Estimation Technique} & \textbf{Time Complexity}\\
%			\hline
%			Approximate Entropy & $O(N^2)$\\
%			Sample Entropy & $O(N^2)$ \\
%			Permutation Entropy & $O(n!N)$\\
%			Specific Entropy & $O(N^2p)$\\
%			NPD-Entropy & $O(N\log N)$\\
%		\end{tabular}
%		\caption{The time complexities for all of the differential entropy rate estimators used in this paper. Generally speaking the relative measures tend to have lower time complexity, which corresponds to their usage an a quick way to compare the complexity of time series. The absolute measures, specific entropy and NPD-Entropy, have different complexity properties, with specific entropy having a high time complexity vs. NPD-Entropy. However, specific entropy is more accurate due to the errors in translating from Shannon to differential entropy rate. This suggests the usage for NPD-Entropy as an online entropy rate estimation technique.}
%		\label{tab:complexity}
%	\end{center}
%\end{table}

The worst-case asymptotic time complexities of the estimators tested
here are shown in Table~\ref{tab:complexity}. Approximate entropy has
two loops which run to calculate the quantity, first to calculate
every $C_i^m(r)$, the number of strings that exceed a threshold, by
fixing one of the substrings, then a second loop over all of the
$C_i^m(r)$'s, \emph{i.e.}, for a pairwise comparison of all
substrings. Hence, the time complexity is approximately $O(N^2)$,
where $N$ is the length of data. Sample entropy requires two separate
loops, which run over the substrings of order $m$ and $m+1$ all
calculating the number of pairs that exceed a threshold, \ie two loops of order $N^2$, which also
results in a pairwise comparison and hence the time complexity is
approximately $O(N^2)$. Permutation entropy takes the relative
frequency of each possible permutation, of order $n$, of which there
exist $n!$ permutations. This is performed across the entire data set,
and hence the worst-case time complexity for permutation entropy is
$O(n!N)$, and hence is extremely sensitive to the choice of order
$n$. With appropriate choices of order this can be used as an quick
measure of complexity of time series, however it does grow extremely
quickly with the order of permutation and hence there is a trade off
here between how much data is available with what order can be
selected for this technique.

Specific entropy's complexity depends on the choice of kernel density
estimator. We used their implementation in R which used the package
{\tt np} for non-parametric kernel density
estimation~\cite{darmon2016specific}. The technique calculates the
bandwidth parameters and then calculates the product and univariate
kernel estimates for windows of length $p$. Hence, to make an estimate
we do $N$ kernel density evaluations, and therefore $Np$ for the
calculation of the estimate of the joint density per window, and hence
with a loop over all of the windows we get an approximate complexity
of $O(N^2p)$. Although this does not appear substantially worse that
the other estimators, the actual amount of computation is dramatically
larger.

NPD Entropy has two steps: (i) quantisation and (ii) a
Shannon entropy rate estimator, which in most cases will dominate the
performance. In our case we use the Grassberger estimate, which
utilises string-matching based on Lempel-Ziv
algorithm~\cite{ziv1977universal}, which has a complexity of $O(N\log
N)$, using the algorithm of
Grassberger~\cite{grassberger1989estimating}.

Complexity estimates are useful to understand scaling properties, but
the actual computation times for finite data can be dominated by
non-asymptotic terms, and so we test performance directly using 1000
estimates of $N=1000$ data points. Tests were performed on a 2.4GHz
Intel Core i5 processor with 8GB of RAM, running MacOS 10.14.6. The
results are shown in Table~\ref{tab:run_time_comparison}. Approximate
entropy was much slower than sample entropy, with permutation entropy
being the fastest by far. However NPD-Entropy is faster than all but
permutation entropy and runs extremely quickly compared to specific
entropy.  The specific entropy was considerably slower than all the
measures with the 1000 estimates being in the order of 6 days.

%\begin{table}
%	\begin{center}
%		\begin{tabular}{l|d{3.2}}
%			\textbf{Estimation Technique} & \multicolumn{1}{r}{ \textbf{Run Time (seconds)}}\\
%			\hline
%			Approximate Entropy & 10,008.0\\
%			Sample Entropy &  263.0\\
%			Permutation Entropy & 0.82\\
%			Specific Entropy & 504,219.9\\
%			NPD-Entropy & 39.96\\
%		\end{tabular}
%		\caption{A comparison of the run times of each of the estimation techniques, for 1000 realisations of FGN time series of 1000 data points. The relative measures run very quickly in comparison to the specific entropy. For absolute differential entropy rate estimation, NPD-Entropy is considerably quicker than specific entropy, so could be used as an efficient measure of complexity.}
%		\label{tab:run_time_comparison}
%	\end{center}
%\end{table}

%TODO: run a non-parallelised example, to justify this paragraph stating that even accounting for the parallelisation, our algorithm is considerably quicker

% The comparison between the NPD-Entropy estimator and specific entropy
% estimator is not entirely fair, due to differences in the languages
% used, R vs. Python. R runs slower than
% Python~\cite{aruoba2015comparison}, however we did not want to
% re-implement specific entropy, which has by far the most complex
% algorithm, and which takes advantage of packages that have no
% equivalent in Python. However, the dramatic difference indicates that
% the difference is not entirely because of the implementation.

Given the time complexity, we suggest using NPD-Entropy as an
efficient way of classifying complexity of a system. NPD-Entropy
provides a more accurate and robust measurement with comparable or
better computation times than most alternatives. Specific entropy is
still superior for accurate measures, but computationally prohibitive
except where accuracy is the only criteria.

\section{Conclusion}
We have defined a new technique for the non-parametric estimation of the differential entropy rate. We have made an explicit link between the differential entropy rate and the Shannon entropy rate of the associated quantised process. This forms the basis of the estimation technique, NPD Entropy, by quantising the continuous data and utilising the existing theory of Shannon entropy rate estimation. We have shown that this estimation technique inherits statistical properties from the Shannon entropy rate estimator and performs better than other differential entropy rate estimators in the presence of strongly correlated data. 

In addition, we have investigated the robustness to non-stationarity of the estimation technique, and provided over-estimates of the entropy. More research is required to analyse the robustness to non-stationarity and understand if the influence is similar to the Shannon entropy rate case. Finally, we have demonstrated the utility of NPD Entropy and shown that it can be run quickly and efficiently to make decent estimates if used in an online mode.

% if have a single appendix:
%\appendix[Proof of the Zonklar Equations]
% or
%\appendix  % for no appendix heading
% do not use \section anymore after \appendix, only \section*
% is possibly needed

% use appendices with more than one appendix
% then use \section to start each appendix
% you must declare a \section before using any
% \subsection or using \label (\appendices by itself
% starts a section numbered zero.)
%

%\appendices
%\section{Proof of the First Zonklar Equation}
%Appendix one text goes here.

% you can choose not to have a title for an appendix
% if you want by leaving the argument blank
%\section{}
%Appendix two text goes here.

% use section* for acknowledgment
\section*{Acknowledgment}

This research was funded by CSIRO's Data61, the ARC Centre of Excellence in Mathematical and Statistical Frontiers and Defence Science and Technology Group.

% Can use something like this to put references on a page
% by themselves when using endfloat and the captionsoff option.
\ifCLASSOPTIONcaptionsoff
  \newpage
\fi

% trigger a \newpage just before the given reference
% number - used to balance the columns on the last page
% adjust value as needed - may need to be readjusted if
% the document is modified later
%\IEEEtriggeratref{8}
% The "triggered" command can be changed if desired:
%\IEEEtriggercmd{\enlargethispage{-5in}}

% references section

% can use a bibliography generated by BibTeX as a .bbl file
% BibTeX documentation can be easily obtained at:
% http://mirror.ctan.org/biblio/bibtex/contrib/doc/
% The IEEEtran BibTeX style support page is at:
% http://www.michaelshell.org/tex/ieeetran/bibtex/
%\bibliographystyle{IEEEtran}
% argument is your BibTeX string definitions and bibliography database(s)
%\bibliography{IEEEabrv,../bib/paper}
%
% <OR> manually copy in the resultant .bbl file
% set second argument of \begin to the number of references
% (used to reserve space for the reference number labels box)

\bibliographystyle{abbrv}
\bibliography{myBibliography}

\begin{thebibliography}{10}

\bibitem{alcaraz2010review}
R.~Alcaraz and J.~J. Rieta.
\newblock A review on sample entropy applications for the non-invasive analysis
  of atrial fibrillation electrocardiograms.
\newblock {\em Biomedical Signal Processing and Control}, 5(1):1--14, 2010.

\bibitem{bandt2002permutation}
C.~Bandt and B.~Pompe.
\newblock Permutation entropy: a natural complexity measure for time series.
\newblock {\em Physical review letters}, 88(17):174102, 2002.

\bibitem{bandt2007order}
C.~Bandt and F.~Shiha.
\newblock Order patterns in time series.
\newblock {\em Journal of Time Series Analysis}, 28(5):646--665, 2007.

\bibitem{beran1994statistics}
J.~Beran.
\newblock {\em Statistics for Long-Memory Processes}.
\newblock Chapman \& Hall/CRC Monographs on Statistics \& Applied Probability.
  Taylor \& Francis, 1994.

\bibitem{chen2006comparison}
X.~Chen, I.~C. Solomon, and K.~H. Chon.
\newblock Comparison of the use of approximate entropy and sample entropy:
  applications to neural respiratory signal.
\newblock In {\em 2005 IEEE Engineering in Medicine and Biology 27th Annual
  Conference}, pages 4212--4215. IEEE, 2006.

\bibitem{cover_thomas_2006}
T.~M. Cover and J.~A. Thomas.
\newblock {\em Elements of Information Theory (Wiley Series in
  Telecommunications and Signal Processing)}.
\newblock Wiley-Interscience, USA, 2006.

\bibitem{cramer1999mathematical}
H.~Cram{\'e}r.
\newblock {\em Mathematical methods of statistics}, volume~43.
\newblock Princeton university press, 1999.

\bibitem{darmon2016specific}
D.~Darmon.
\newblock Specific differential entropy rate estimation for continuous-valued
  time series.
\newblock {\em Entropy}, 18(5):190, 2016.

\bibitem{darmon2018information}
D.~Darmon.
\newblock Information-theoretic model selection for optimal prediction of
  stochastic dynamical systems from data.
\newblock {\em Physical Review E}, 97(3):032206, 2018.

\bibitem{delgado-bonal2019}
A.~Delgado-Bonal and A.~Marshak.
\newblock Approximate entropy and sample entropy: A comprehensive tutorial.
\newblock {\em Entropy}, 21:541, 05 2019.

\bibitem{Durrett2010}
R.~Durrett.
\newblock {\em Probability: Theory and Examples}.
\newblock Cambridge University Press, New York, NY, USA, 4th edition, 2010.

\bibitem{eckmann1985ergodic}
J.-P. Eckmann and D.~Ruelle.
\newblock Ergodic theory of chaos and strange attractors.
\newblock In {\em The theory of chaotic attractors}, pages 273--312. Springer,
  1985.

\bibitem{feutrill2021differential}
A.~Feutrill and M.~Roughan.
\newblock Differential entropy rate characterisations of long range dependent
  processes, 2021.
\newblock arXiv preprint {http://arxiv.org/abs/2102.05306}{arXiv:2102.05306}.

\bibitem{friedman2001elements}
J.~Friedman, T.~Hastie, and R.~Tibshirani.
\newblock {\em The elements of statistical learning}, volume~1.
\newblock Springer series in statistics New York, 2001.

\bibitem{gefferth2003nature}
A.~Gefferth, D.~Veitch, I.~Maricza, S.~Moln{\'a}r, and I.~Ruzsa.
\newblock The nature of discrete second-order self-similarity.
\newblock {\em Advances in Applied Probability}, pages 395--416, 2003.

\bibitem{grassberger1989estimating}
P.~Grassberger.
\newblock Estimating the information content of symbol sequences and efficient
  codes.
\newblock {\em IEEE Transactions on Information Theory}, 35(3):669--675, 1989.

\bibitem{harris2020array}
C.~R. Harris, K.~J. Millman, S.~J. van~der Walt, R.~Gommers, P.~Virtanen,
  D.~Cournapeau, E.~Wieser, J.~Taylor, S.~Berg, N.~J. Smith, R.~Kern, M.~Picus,
  S.~Hoyer, M.~H. van Kerkwijk, M.~Brett, A.~Haldane, J.~F. del R{'{\i}}o,
  M.~Wiebe, P.~Peterson, P.~G{'{e}}rard-Marchant, K.~Sheppard, T.~Reddy,
  W.~Weckesser, H.~Abbasi, C.~Gohlke, and T.~E. Oliphant.
\newblock Array programming with {NumPy}.
\newblock {\em Nature}, 585(7825):357--362, Sept. 2020.

\bibitem{hosking1981}
J.~R.~M. Hosking.
\newblock {Fractional differencing}.
\newblock {\em Biometrika}, 68(1):165--176, 04 1981.

\bibitem{hurst1951}
H.~E. Hurst.
\newblock Long-term storage capacity of reservoirs.
\newblock {\em Trans. Amer. Soc. Civil Eng.}, 116:770--799, 1951.

\bibitem{kontoyiannis1998nonparametric}
I.~Kontoyiannis, P.~H. Algoet, Y.~M. Suhov, and A.~J. Wyner.
\newblock Nonparametric entropy estimation for stationary processes and random
  fields, with applications to english text.
\newblock {\em IEEE Transactions on Information Theory}, 44(3):1319--1327,
  1998.

\bibitem{kontoyiannis_soukhov1994}
I.~Kontoyiannis and I.~Soukhov.
\newblock Prefixes and the entropy rate for long-range sources.
\newblock In {\em Proceedings - 1994 IEEE International Symposium on
  Information Theory, ISIT 1994}, Dec. 1994.
\newblock 1994 IEEE International Symposium on Information Theory, ISIT 1994 ;
  Conference date: 27-06-1994 Through 01-07-1994.

\bibitem{lake2002sample}
D.~E. Lake, J.~S. Richman, M.~P. Griffin, and J.~R. Moorman.
\newblock Sample entropy analysis of neonatal heart rate variability.
\newblock {\em American Journal of Physiology-Regulatory, Integrative and
  Comparative Physiology}, 283(3):R789--R797, 2002.

\bibitem{numba2015}
S.~K. Lam, A.~Pitrou, and S.~Seibert.
\newblock Numba: A llvm-based python jit compiler.
\newblock In {\em Proceedings of the Second Workshop on the LLVM Compiler
  Infrastructure in HPC}, LLVM '15, New York, NY, USA, 2015. Association for
  Computing Machinery.

\bibitem{Leland:1993:SNE:166237.166255}
W.~E. Leland, M.~S. Taqqu, W.~Willinger, and D.~V. Wilson.
\newblock On the self-similar nature of ethernet traffic.
\newblock In {\em Conference Proceedings on Communications Architectures,
  Protocols and Applications}, SIGCOMM '93, pages 183--193, New York, NY, USA,
  1993. ACM.

\bibitem{MandelbrotNess1968}
B.~B. Mandelbrot and J.~W. van Ness.
\newblock {F}ractional {B}rownian motions, fractional noises and applications.
\newblock {\em SIAM Review}, 10:422--437, 1968.

\bibitem{ornstein1993entropy}
D.~S. Ornstein and B.~Weiss.
\newblock Entropy and data compression schemes.
\newblock {\em IEEE Transactions on Information Theory}, 39(1):78--83, 1993.

\bibitem{pincus1991approximate}
S.~M. Pincus.
\newblock Approximate entropy as a measure of system complexity.
\newblock {\em Proceedings of the National Academy of Sciences},
  88(6):2297--2301, 1991.

\bibitem{quas1999entropy}
A.~N. Quas.
\newblock An entropy estimator for a class of infinite alphabet processes.
\newblock {\em Theory of Probability \& Its Applications}, 43(3):496--507,
  1999.

\bibitem{rice2006mathematical}
J.~A. Rice.
\newblock {\em Mathematical Statistics and Data Analysis.}
\newblock Belmont, CA: Duxbury Press., third edition, 2006.

\bibitem{richman2000physiological}
J.~S. Richman and J.~R. Moorman.
\newblock Physiological time-series analysis using approximate entropy and
  sample entropy.
\newblock {\em American Journal of Physiology-Heart and Circulatory
  Physiology}, 278(6):H2039--H2049, 2000.

\bibitem{Shannon48}
C.~E. Shannon.
\newblock A mathematical theory of communication.
\newblock {\em Bell Syst. Tech. J.}, 27(3):379--423, 1948.

\bibitem{shields1992}
P.~C. Shields.
\newblock Entropy and prefixes.
\newblock {\em Ann. Probab.}, 20(1):403--409, 01 1992.

\bibitem{varotsos2006}
C.~Varotsos and D.~Kirk-Davidoff.
\newblock Long-memory processes in ozone and temperature variations at the
  region 60° s?60° n.
\newblock {\em Atmos. Chem. Phys.}, 6, 09 2006.

\bibitem{willinger_self_similar_high_variability_1997}
W.~{Willinger}, M.~S. {Taqqu}, R.~{Sherman}, and D.~V. {Wilson}.
\newblock Self-similarity through high-variability: statistical analysis of
  ethernet lan traffic at the source level.
\newblock {\em IEEE/ACM Transactions on Networking}, 5(1):71--86, Feb 1997.

\bibitem{willinger_stock_1999}
W.~Willinger, M.~S. Taqqu, and V.~Teverovsky.
\newblock Stock market prices and long-range dependence.
\newblock {\em Finance and Stochastics}, 3(1):1--13, Jan. 1999.

\bibitem{wyner1989}
A.~D. {Wyner} and J.~{Ziv}.
\newblock Some asymptotic properties of the entropy of a stationary ergodic
  data source with applications to data compression.
\newblock {\em IEEE Transactions on Information Theory}, 35(6):1250--1258,
  1989.

\bibitem{yentes2013appropriate}
J.~M. Yentes, N.~Hunt, K.~K. Schmid, J.~P. Kaipust, D.~McGrath, and
  N.~Stergiou.
\newblock The appropriate use of approximate entropy and sample entropy with
  short data sets.
\newblock {\em Annals of biomedical engineering}, 41(2):349--365, 2013.

\bibitem{ziv1977universal}
J.~Ziv and A.~Lempel.
\newblock A universal algorithm for sequential data compression.
\newblock {\em IEEE Transactions on information theory}, 23(3):337--343, 1977.

\end{thebibliography}

% biography section
% 
% If you have an EPS/PDF photo (graphicx package needed) extra braces are
% needed around the contents of the optional argument to biography to prevent
% the LaTeX parser from getting confused when it sees the complicated
% \includegraphics command within an optional argument. (You could create
% your own custom macro containing the \includegraphics command to make things
% simpler here.)
%\begin{IEEEbiography}[{\includegraphics[width=1in,height=1.25in,clip,keepaspectratio]{mshell}}]{Michael Shell}
% or if you just want to reserve a space for a photo:

%\begin{IEEEbiography}{Michael Shell}
%Biography text here.
%\end{IEEEbiography}

% if you will not have a photo at all:
%\begin{IEEEbiographynophoto}{John Doe}
%Biography text here.
%\end{IEEEbiographynophoto}

% insert where needed to balance the two columns on the last page with
% biographies
%\newpage

%\begin{IEEEbiographynophoto}{Jane Doe}
%Biography text here.
%\end{IEEEbiographynophoto}

% You can push biographies down or up by placing
% a \vfill before or after them. The appropriate
% use of \vfill depends on what kind of text is
% on the last page and whether or not the columns
% are being equalized.

%\vfill

% Can be used to pull up biographies so that the bottom of the last one
% is flush with the other column.
%\enlargethispage{-5in}

% that's all folks
\end{document}